\documentclass[aps,secnumarabic,nobalancelastpage,amsmath,amssymb,nofootinbib]{revtex4}

\usepackage{hyperref,graphicx}
\usepackage{dcolumn}
\usepackage{bm}
\usepackage{bbm}
\usepackage{tikz}
\usepackage{physics}
\usepackage{enumitem}
\usepackage{amssymb,amsfonts}
\usepackage{epsfig}
\usepackage{epstopdf}
\usepackage[all]{xy}
\usepackage{amsthm}
\usepackage{dcolumn}
\usepackage{hyperref}
\usepackage{url}
\usepackage{dsfont}
\usepackage{slashed}
\usepackage{mathrsfs}
\usepackage{soul}
\usepackage{float}
\usetikzlibrary{tikzmark,
                matrix,
                arrows.meta,
                decorations.pathreplacing}
\usepackage{amsmath}
\usepackage[T1]{fontenc}

\DeclareMathAlphabet{\mathpzc}{OT1}{pzc}{m}{it}

\theoremstyle{plain}

\theoremstyle{remark}

\theoremstyle{plain}
\usepackage{upgreek}
\usepackage{booktabs}

\usepackage{hyperref}
\hypersetup{
    colorlinks=true,
    linkcolor=blue,
    citecolor=blue,
    urlcolor=blue,
}

\newtheorem{Theorem}{Theorem}

\newtheorem{Corollary}{Corollary}

\newtheorem{Definition and Notation}{Definition and Notation}

\begin{document}

\title{Araki Cohomology of the de Sitter Scalar Discrete Series}

\author{Maryam Bajalan$^{1}$\footnote{maryam.bajalan@math.bas.bg}}

\author{Jean-Pierre Gazeau$^{2,3}$\footnote{gazeau@apc.in2p3.fr, j.gazeau@uwb.edu.pl}}

\author{Hamed Pejhan$^{1}$\footnote{pejhan@math.bas.bg}}

\author{Petr Siegl$^{4}$\footnote{siegl@tugraz.at}}

\affiliation{$^1$Institute of Mathematics and Informatics, Bulgarian Academy of Sciences, Acad. G. Bonchev Str. Bl. 8, 1113, Sofia, Bulgaria\\
$^2$Universit\'e Paris Cit\'e, CNRS, Astroparticule et Cosmologie, F-75013 Paris, France\\$^3$Faculty of Mathematics, University of Bia\l ystok, 15-245 Bia\l ystok, Poland\\ $^4$ Institut f\"{u}r Angewandte Mathematik, Technische Universit\"{a}t Graz, A 8010 Graz, Austria}

\date{\today}

\begin{abstract}
    {Building on the recently established de Sitter (dS)-covariant indecomposable Krein realization of the scalar discrete-series unitary irreducible representations $\Pi_{p,0}$, $p=1,2,\cdots$, of $\mathrm{SO}_0(1,4)$, naturally organized into a Gupta-Bleuler triplet, we uncover its underlying Araki-cohomological structure. We show that the Krein-Gupta-Bleuler construction provides a concrete realization of Araki's standard Gupta-Bleuler triplet, with its non-split indecomposable structure encoded by non-trivial first cohomology classes related by adjunction. We further show that the reducible intermediate sector naturally resolves into its two irreducible components, inducing a corresponding decomposition of the extension classes and yielding a four-layer refinement of Araki's standard triplet. The resulting Gupta-Bleuler quadruplet provides a finer cohomological resolution, distinguishing the physical representation from its complex-conjugate sector.}   
\end{abstract}

\maketitle

\setcounter{equation}{0} 
\section{Introduction}

The representation theory of the dS group $\mathrm{SO}_0(1,4)$ provides the natural framework for classifying elementary systems in dS spacetime (see, e.g., Refs. \cite{Gazeau2024, Gazeau2026}). Within its unitary dual, the discrete series \cite{Dixmier, Takahashi} is distinguished by the emergence of gauge structures that are naturally realized through indecomposable representations and Gupta-Bleuler constructions. Such structures occur for strictly massless representations \cite{GravGB, takVec}, partially massless representations \cite{GravPartially}, and, more recently, the scalar discrete series \cite{Bajalan2026}. From a cohomological viewpoint, indecomposability is not merely a feature of the representation matrix; Araki's theory of representations with invariant inner product \cite{Araki-1985-97, Pierotti-1990-31, Simon-1980} identifies the underlying non-split extension structure with non-trivial cohomology classes. The dS discrete series therefore provides a natural setting in which this abstract cohomological structure can be realized explicitly.

In this paper, we realize this cohomological structure explicitly for the scalar discrete-series representations $\Pi_{p,0}$, $p=1,2,\cdots$, of the dS group, building on their recently established dS-covariant realization \cite{Bajalan2026}. For each $p$, $\Pi_{p,0}$ is embedded in an indecomposable representation on a dS-covariant Krein space $V_{\mathrm{tot}}$, endowed with a dS-invariant Klein-Gordon (KG) form and naturally organized into the Krein-Gupta-Bleuler triplet $V_g \subset V \subset V_{\mathrm{tot}}$, where $V_g$ is the gauge subspace and the physical UIR $\Pi_{p,0}$ is carried by $V^\prime \simeq V/V_g$, with ``$\simeq$'' standing for vector space isomorphism. We show that the indecomposable structure of this realization is governed by non-trivial Araki extension classes.

Our main result reveals a two-level cohomological structure underlying the dS-covariant indecomposable realization of the scalar discrete series. At the first level, we show that the representation $\pi$ of $\mathrm{SO}_0(1,4)$ on $V_{\mathrm{tot}}$ realizes Araki's standard Gupta-Bleuler structure,
\begin{align*}
    \pi
    =
    \pi_1^\sharp
    \longrightarrow
    \pi^{}_2
    \longrightarrow
    \pi^{}_1 \,.
\end{align*}
The three representations entering this chain have distinct representation-theoretic roles. The intermediate representation is reducible and decomposes as
\begin{align*}
    \pi^{}_2
    = 
    \pi^{}_{2^\prime} \oplus \pi_{{\overline{2}^\prime}} \,,
\end{align*}
where $\pi^{}_{2^\prime}$ and $\pi_{{\overline{2}^\prime}}$ are two equivalent realizations of the UIR $\Pi_{p,0}$, carried by $V^\prime \simeq V/V_g$ and its pointwise complex conjugate $\overline{V^\prime}$, respectively. In particular, $V^\prime$, endowed with the positive-definite restriction of the KG form, is identified as the physical Hilbert space. By contrast, $\pi^{}_1$ and $\pi_1^\sharp$ are mutually adjoint-conjugate finite-dimensional representations carried by the gauge sector $V_g$ and the supplementary sector $V_s$, respectively, with $V_s$ completing $V_g$ into the non-degenerate subspace $V_g\dotplus V_s$. Note that ``$\dotplus$'' denotes a direct sum for which KG orthogonality is not implied, versus ``$\oplus$'', which denotes a KG-orthogonal direct sum. Although neither $\pi^{}_1$ nor $\pi_1^\sharp$ is a unitary realization of $\Pi_{p,0}$, each is Weyl equivalent to $\Pi_{p,0}$, in the sense of possessing the same eigenvalues of the dS Casimir operators. Thus, the full indecomposable representation $\pi$ links the physical scalar discrete-series sector and its complex-conjugate realization to the finite-dimensional gauge and supplementary sectors within a single dS-covariant structure. Note that, under the natural identification induced by the quotient map, one has $ V_{\mathrm{tot}}/V \simeq \overline{V^\prime}\oplus V_s$, so that $\overline{V^\prime}$ and $V_s$ may be regarded as complementary subspaces of $V_{\mathrm{tot}}/V$.

The successive non-split extensions in this standard three-layer realization are encoded by the non-trivial first cohomology classes
\begin{align*}
    0\neq c_{12}\in H^1(\pi_2, \pi^{}_1) \,,
    \quad
    0\neq c_{12}^{\ast}\in H^1(\pi_1^\sharp, \pi^{}_2) \,,
\end{align*}
which are related by adjunction and satisfy
\begin{align*}
    c^{}_{12}\times c_{12}^{\ast} \in B_s^2(\pi_1^\sharp ,\pi^{}_1) \,.
\end{align*}
This is precisely the cohomological compatibility condition entering Araki's existence theorem for a Gupta-Bleuler triplet in standard form \cite{Araki-1985-97}. The dS Krein-Gupta-Bleuler realization therefore provides an explicit realization of Araki's abstract construction, with its indecomposable character encoded by non-trivial extension classes.

At the second level, the reducibility of the intermediate representation $\pi_2$ reveals a finer cohomological structure. Resolving $\pi_2$ into the two irreducible realizations of $\Pi_{p,0}$ carried separately by $V^\prime$ and $\overline{V^\prime}$ refines the standard triplet into the four-layer realization
\begin{align*}
    \pi = \vcenter{\hbox{%
    \xymatrix@C=1.5em@R=1em{
    & \pi^{}_{2^\prime} \ar[dr] & \\
    \pi_1^\sharp \ar[ur] \ar[dr] & & \pi^{}_1\\
    & \pi_{{\overline{2}^\prime}} \ar[ur] &
    }%
    }}\,.
\end{align*}
Correspondingly, the extension class connecting the intermediate representation to the gauge sector resolves into the two non-trivial classes
\begin{align*}
    0\neq c_{12^\prime} \in H^1(\pi^{}_{2^\prime}, \pi^{}_1) \,, 
    \quad
    0\neq c_{1{\overline{2}^\prime}} \in H^1(\pi_{{\overline{2}^\prime}}, \pi^{}_1) \,,
\end{align*}
while the adjoint extension class connecting the supplementary sector to the intermediate representation resolves as
\begin{align*}
    0\neq c_{12^\prime}^{\ast} \in H^1(\pi_1^\sharp,\pi^{}_{2^\prime}) \,,
    \quad
    0\neq c_{1{\overline{2}^\prime}}^{\ast} \in H^1(\pi_1^\sharp,\pi_{{\overline{2}^\prime}}) \,.
\end{align*}
The corresponding cohomological compatibility relation becomes
\begin{align*}
    c^{}_{12^\prime}\times c_{12^\prime}^{\ast} 
    +
    c_{1{\overline{2}^\prime}}\times c_{1{\overline{2}^\prime}}^{\ast} 
    \in B_s^2(\pi_1^\sharp, \pi^{}_1) \,.
\end{align*}
We thereby obtain a natural four-layer refinement of Araki's standard Gupta-Bleuler triplet---a Gupta-Bleuler quadruplet---in which the two equivalent realizations of $\Pi_{p,0}$ emerge as distinct irreducible cohomological layers, while the finite-dimensional gauge and supplementary representations remain as the mutually adjoint-conjugate outer layers. The refined structure thus separates the physical realization of $\Pi_{p,0}$ from its complex-conjugate counterpart and makes explicit the extension classes through which each is linked to the gauge and supplementary sectors.

The proof develops this structure from the KG geometry of $V_{\mathrm{tot}}$. The associated Araki filtration and projections determine the induced representations and extension maps; the dS representation law yields their cocycle and adjunction relations; and the absence of invariant complements establishes the non-triviality of the corresponding first cohomology classes. The standard triplet and the refined quadruplet thereby emerge as two complementary descriptions of the same dS-covariant indecomposable representation.

The preceding three- and four-layer structures, together with the associated non-trivial extension classes and cohomological compatibility relations, are summarized in Theorem \ref{main thoerem} and proved in detail thereafter.

The paper is organized as follows. In Sec. \ref{Def-all}, we review Araki's cohomological framework for indecomposable representations, with particular emphasis on invariant filtrations, extension maps, and the cohomological characterization of Gupta-Bleuler triplets in standard form. In Sec. \ref{Sec 3}, we apply this framework to the dS scalar discrete-series representations $\Pi_{p,0}$. After recalling the dS-covariant indecomposable Krein-Gupta-Bleuler construction of Ref. \cite{Bajalan2026}, we establish the corresponding Araki filtration and derive the associated projections, induced representations, and extension classes. This leads first to the standard three-layer Gupta-Bleuler realization and subsequently, by resolving its reducible intermediate sector, to the refined four-layer cohomological structure stated in Theorem \ref{main thoerem}. In Sec. \ref{Sec conclu}, we conclude with a discussion of the representation-theoretic implications of these results and of the directions they suggest for the dS discrete series.

\section{Araki's Cohomological Theory of Indecomposable Representations} \label{Def-all}

Following Araki \cite{Araki-1985-97} (see also \cite{Pierotti-1990-31, Simon-1980}), let $\pi\colon\; G\longrightarrow \mathrm{GL}({\mathfrak{H}})$ be a representation of a Lie group $G$ on a complex vector space ${\mathfrak{H}}$, endowed with a $G$-invariant, non-degenerate Hermitian sesquilinear form $\langle\cdot,\cdot\rangle$. Let
\begin{align}
    (\left\{0\right\} =) \,{\mathfrak{H}}^{}_0 \;\;\subset\;\; {\mathfrak{H}}^{}_1 \;\;\subset\;\; \cdots \;\;\subset\;\; {\mathfrak{H}}_n \,(= {\mathfrak{H}})
\end{align}
be a filtration by closed $G$-invariant subspaces such that, for every $j=1,\cdots,n$, the subspace ${\mathfrak{H}}_{j-1}$ admits no \emph{closed} $G$-invariant complement in ${\mathfrak{H}}_j$. Here, closedness is understood with respect to the Hermitian sesquilinear form $\langle\cdot,\cdot\rangle$; a subspace ${\mathfrak{E}}\subset{\mathfrak{H}}$ is called \emph{closed} if
\begin{align}
    {\mathfrak{E}} = ({\mathfrak{E}}^{\perp})^{\perp} \,, \quad\mbox{where}\quad {\mathfrak{E}}^{\perp} := \left\{ \xi\in{\mathfrak{H}} \,;\, \left\langle\xi,\eta\right\rangle = 0 \;\;\text{for all}\; \eta\in{\mathfrak{E}} \right\} \,.
\end{align}
Each inclusion ${\mathfrak{H}}_{j-1}\subset{\mathfrak{H}}_j$ defines a non-split extension of representations, so that the filtration realizes $\pi$ as a successive sequence of non-split \emph{indecomposable}\footnote{A representation $\pi\colon\; G\longrightarrow \mathrm{GL}({\mathfrak{H}})$ of a Lie group $G$ on a complex vector space ${\mathfrak{H}}$, endowed with a $G$-invariant, non-degenerate Hermitian sesquilinear form $\langle\cdot,\cdot\rangle$, is called \emph{irreducible} if ${\mathfrak{H}}$ contains no non-zero proper closed $G$-invariant subspace. It is called \emph{topologically indecomposable} if there exist no non-zero closed $G$-invariant subspaces ${\mathfrak{E}}_1$ and ${\mathfrak{E}}_2$ satisfying
\begin{align*}
    {\mathfrak{E}}_1 \cap {\mathfrak{E}}_2 = \left\{0\right\} \,, \quad \left(({\mathfrak{E}}_1 + {\mathfrak{E}}_2)^\perp\right)^\perp = {\mathfrak{H}} \,.
\end{align*}
Every irreducible representation is topologically indecomposable; however, the converse does not hold in general.\label{footnote indecom}} extensions.

For each $j=1,\cdots,n$, let ${\mathfrak{R}}^{}_j$ be a complement of ${\mathfrak{H}}_{j-1}$ in ${\mathfrak{H}}_j$, so that
\begin{align} \label{R1+...+Rn''}
    {\mathfrak{H}}_j = {\mathfrak{H}}_{j-1} \oplus {\mathfrak{R}}^{}_j \,, \quad\mbox{where}\quad {\mathfrak{R}}^{}_j \simeq {\mathfrak{H}}_j/{\mathfrak{H}}_{j-1} \,.
\end{align}
Note that, throughout the text, the symbol ``$\simeq$'' denotes an isomorphism of vector spaces or representations, with its precise meaning determined by the context. Accordingly, one obtains the decomposition
\begin{align} \label{R1+...+Rn}
    {\mathfrak{H}} = {\mathfrak{R}}^{}_1 \oplus \cdots \oplus {\mathfrak{R}}^{}_n \,, \quad\mbox{with}\quad \mathfrak{R}_1=\mathfrak{H}^{}_1 \,.
\end{align} 
The quotient space ${\mathfrak{H}}_j/{\mathfrak{H}}_{j-1} \simeq {\mathfrak{R}}^{}_j$ carries the induced representation
\begin{align}
    \pi_j(\texttt{g})[\xi] = [\pi(\texttt{g})\xi] \,,
\end{align}
where $\texttt{g}\in G$ and $[\xi] = \xi + {\mathfrak{H}}_{j-1}$ denotes the equivalence class of $\xi\in{\mathfrak{H}}_j$ in ${\mathfrak{H}}_j/{\mathfrak{H}}_{j-1}$. A representation \linebreak $\pi_j^\sharp \,\colon\; G \longrightarrow \mathrm{GL}(\mathfrak{R}_j^\sharp)$ is said to be \emph{adjoint conjugate} to a representation $\pi_j \colon\; G \longrightarrow \mathrm{GL}(\mathfrak{R}_j)$ if there exists a sesquilinear pairing
\begin{align}
    \left\langle \cdot,\cdot\right\rangle_\sharp \colon\; \mathfrak{R}_j^\sharp\times \mathfrak{R}_j^{} \longrightarrow \mathbb{C} \,,
\end{align}
such that:
\begin{enumerate}
    \item{$\mathfrak{R}_j^\sharp$ and $\mathfrak{R}^{}_j$ separate each other, namely,
    \begin{align}\label{3333'}
        \left\langle \phi , \psi \right\rangle_\sharp = 0 \quad\text{for all}\quad \psi \in \mathfrak{R}_j \quad\Longrightarrow\quad \phi = 0 \,, \nonumber\\[3pt]
        \left\langle \phi , \psi \right\rangle_\sharp = 0 \quad\text{for all}\quad \phi \in \mathfrak{R}_j^\sharp\quad\Longrightarrow\quad \psi = 0 \,.
    \end{align}}

    \item{The pairing is $G$-invariant in the sense that, for all $\phi \in {\mathfrak{R}}_j^\sharp$ and $\psi \in {\mathfrak{R}}^{}_j$,
    \begin{align}\label{conj-def}
        \left\langle \pi_j^\sharp (\texttt{g})\phi \,,\, \pi^{}_j(\texttt{g}) \psi \right\rangle_{\sharp} = \left\langle \phi \,,\, \psi \right\rangle_{\sharp} \,.
    \end{align}}
\end{enumerate}

The pairing $\langle\cdot,\cdot\rangle_{\sharp}$, introduced in the definition of adjoint-conjugate representations, is, in the present setting where all representations are realized within a single representation space $({\mathfrak{H}}, \langle\cdot,\cdot\rangle)$, naturally induced by the original $G$-invariant Hermitian form. More precisely, whenever ${\mathfrak{R}}^{}_i \simeq {\mathfrak{R}}_j^\sharp$ and $\pi^{}_i \simeq \pi_j^\sharp$, with the restriction of $\langle\cdot,\cdot\rangle$ to ${\mathfrak{R}}^{}_i \times {\mathfrak{R}}^{}_j$ non-degenerate, one may identify
\begin{align}
    \langle\phi,\psi\rangle_{\sharp} := \langle\phi,\psi\rangle \,, \quad \phi\in{\mathfrak{R}}^{}_i\,,\quad \psi\in{\mathfrak{R}}^{}_j \,.
\end{align}
Henceforth, we shall omit the subscript ``$\,\sharp\,$'' and simply write $\langle\cdot,\cdot\rangle$ for the induced pairing between adjoint-conjugate representation spaces.

Let $\pi_j$ and $\pi_i$ be representations of $G$ on ${\mathfrak{R}}^{}_j$ and ${\mathfrak{R}}^{}_i$, respectively. We adopt the following standard cohomological notation and terminology:
\begin{enumerate}
    \item{$C^n(\pi_j,\pi_i)$ denotes the vector space of all \emph{$n$-cochains} $c_{ij}$ that assign to each $n$-tuple $(\texttt{g}_1, \cdots, \texttt{g}_n)\in G^n$ a linear operator $c_{ij}(\texttt{g}_1, \cdots, \texttt{g}_n) \colon\; {\mathfrak{R}}^{}_j \longrightarrow  {\mathfrak{R}}^{}_i$. For $n=0$, $c_{ij}\in C^0(\pi_j,\pi_i)$ is a fixed linear map from $\mathfrak{R}_j$ to $\mathfrak{R}_i$.}
    
    \item{The product of an $n$-cochain $c_{jk}\in C^n(\pi_k,\pi_j)$ and an $m$-cochain $c_{ki}\in C^m(\pi_i,\pi_k)$ is defined by
    \begin{align}\label{pro-eq}
        (c_{jk} \times c_{ki})(\texttt{g}_1, \cdots, \texttt{g}_{n+m}) := c_{jk}(\texttt{g}_1, \cdots, \texttt{g}_n) \,c_{ki}(\texttt{g}_{n+1}, \cdots, \texttt{g}_{n+m}) \,,
    \end{align}
    where juxtaposition on the right-hand side denotes composition of linear maps.}
    
    \item{The \emph{$n$-coboundary} operator $\delta \colon\; C^n(\pi_j,\, \pi_i)\longrightarrow C^{n+1}(\pi_j, \, \pi_i)$ is defined, for any $c_{ij} \in C^n(\pi_j,\pi_i)$, by
    \begin{align}\label{sum-cn}
        (\delta c_{ij})(\texttt{g}_1,\cdots,\texttt{g}_{n+1}) =&\, \pi_i(\texttt{g}_1) \,c_{ij}(\texttt{g}_2,\cdots,\texttt{g}_{n+1}) \nonumber\\[3pt]
        &+ \sum_{k=1}^{n} (-1)^k \,c_{ij} (\texttt{g}_1,\cdots, \texttt{g}_k\texttt{g}_{k+1}, \cdots, \texttt{g}_{n+1}) \nonumber\\[3pt]
        &+ (-1)^{n+1} \,c_{ij}(\texttt{g}_1,\cdots,\texttt{g}_n) \,\pi_j(\texttt{g}_{n+1}) \,, \hspace{3cm} (\mbox{satisfying}\;\;\delta^2 = 0) \,.
    \end{align}
    Specifically, for $n=0$, Eq. \eqref{sum-cn} reduces to
    \begin{align}\label{sum-cn00}
        (\delta c_{ij})(\texttt{g}) = \pi_i(\texttt{g})c_{ij} - c_{ij}\pi_j(\texttt{g}) \,.
    \end{align}}

    \item{The spaces of \emph{$n$-cocycles}, \emph{$n$-coboundaries}, and the \emph{$n$-cohomology group} are defined as
    \begin{align}
        \label{Def-1-co} Z^n(\pi_j,\pi_i) &:= \left\{c_{ij}\in C^n(\pi_j,\pi_i) \;;\; \delta c_{ij} = 0 \right\} \,, \\[3pt]
        \label{Def-2-co} B^n(\pi_j,\pi_i) &:= \delta C^{n-1}(\pi_j,\pi_i) \,, \quad n\geqslant 1 \,, \qquad (B^0(\pi_j,\pi_i):=\left\{0\right\}) \,, \\[3pt]
        \label{Def-3-co} H^n(\pi_j,\pi_i) &:= Z^n(\pi_j,\pi_i) \big/ B^n(\pi_j,\pi_i) \,.
    \end{align}
    The elements of $H^n(\pi_j,\pi_i)$ are equivalence classes $[c_{ij}]$ of $n$-cocycles modulo $n$-coboundaries. Following Araki \cite{Araki-1985-97}, throughout we use the same symbol $c_{ij}$ for a cocycle and the cohomology class it represents; thus, an expression such as $0\neq c_{ij}\in H^1(\pi_j,\pi_i)$ is understood as $0\neq[c_{ij}]\in H^1(\pi_j,\pi_i)$, with $c_{ij}\in Z^1(\pi_j,\pi_i)$ a representative of the class.}

    \item{The space $C_b^n(\pi_j,\pi_i)$ consists of all $n$-cochains $c_{ij}\in C^n(\pi_j,\pi_i)$ for which there exists an $n$-cochain $c_{ij}^\ast\in C^n(\pi_i^\sharp,\pi_j^\sharp)$ satisfying
    \begin{align}\label{eq:3.1}
        \left\langle \phi \,,\, c^{}_{ij}(\texttt{g}_1,\cdots\,,\texttt{g}_n)\psi \right\rangle = \left\langle c^\ast_{ij} (\texttt{g}_n^{-1},\cdots, \,\texttt{g}_1^{-1})\phi \,,\, \psi \right\rangle \,,
    \end{align}
    for every $\psi\in{\mathfrak{R}}^{}_j$, $\phi\in{\mathfrak{R}}_i^\sharp$, and $(\texttt{g}_1, \cdots, \texttt{g}_n)\in G^n$. The following properties then follow immediately from Eq. \eqref{eq:3.1},
    \begin{align}\label{khodshifte}
        \left(c^{\ast}_{jk}\right)^\ast = c^{}_{jk} \,, \quad\mbox{and}\quad \left(c^{}_{jk} c^{}_{ki}\right)^\ast = c^{\ast}_{ki} \,c^{\ast}_{jk} \,.
    \end{align}
    Moreover, for $c_{ij}\in C_b^n(\pi_j,\pi_i)$, the adjoint operation and the coboundary operator are related by
    \begin{align}\label{maryam}
        \left(\delta c_{ij}\right)^\ast
        =(-1)^{n+1}\, \delta\left(c_{ij}^\ast\right) \,.
    \end{align}}

    \item{A cochain $c_{ij}$ is called self-adjoint if $c_{ij}^\ast = c^{}_{ij}$. The space of all self-adjoint cochains is denoted by $C_s^n(\pi_j,\pi_i)$. The corresponding space of self-adjoint coboundaries is defined by $B_s^n := \delta C_s^{n-1}$.}
\end{enumerate}

The preceding constructions culminate in the operator-valued upper-triangular realization of the representation $\pi$. More precisely, relative to the decomposition \eqref{R1+...+Rn}, one has
\begin{align}\label{matrix}
    \pi(\texttt{g}) =
    \begin{pmatrix}
        \pi^{}_1(\texttt{g}) & c_{12}(\texttt{g}) & \cdots & c_{1n}(\texttt{g}) \\
        0 & \pi^{}_2(\texttt{g}) & \cdots & c_{2n}(\texttt{g}) \\
        \vdots & \ddots & \ddots & \vdots \\
        0 & \cdots & 0 & \pi_n(\texttt{g})
    \end{pmatrix} \,,
\end{align}
or, schematically,
\begin{align}
    \pi = \pi_n \longrightarrow \cdots \longrightarrow \pi^{}_1 \,,
\end{align}
where the operator-valued $1$-cochains $c_{ij}\in C^1(\pi_j,\pi_i)$ encode the extension data.

Araki proved that the absence of $G$-invariant complements admits the following cohomological characterization.
\begin{Corollary}\label{Prop}\cite[Cor. 4.4]{Araki-1985-97} 
    For a representation $\pi$ of the form \eqref{matrix}, $\mathfrak{H}_j$ has no $G$-invariant complement in $\mathfrak{H}_{j+1}$ for each $j = 1, \cdots, n-1$ if and only if the following conditions hold
    \begin{enumerate}
        \item{
        \begin{align}
            c_{k, k+1} \notin B^1(\pi_{k+1}, \pi_k)\,, \quad\mbox{for}\quad k = 1, \cdots, n-1 \,.
        \end{align}}

        \item{
        \begin{align}
            \delta c_{kl} = -\sum_{j=k+1}^{l-1} c_{kj} \times c_{jl}\,, \quad\mbox{for}\quad 1 \leqslant k < l +1 \leqslant n+1 \,.
        \end{align}}        
    \end{enumerate}
\end{Corollary}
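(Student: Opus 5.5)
The plan is to read off both conditions of the Corollary from the block-matrix realization \eqref{matrix}, using two ingredients: the representation law $\pi(\texttt{g}_1\texttt{g}_2)=\pi(\texttt{g}_1)\pi(\texttt{g}_2)$, and the correspondence between $G$-invariant complements and $0$-cochains that trivialize the relevant off-diagonal block. Condition 2 is simply the homomorphism property written out blockwise, and it holds automatically. Condition 1 is the cohomological restatement of non-splitting at each step of the filtration.

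\textbf{Step 1 (condition 2).} I would compare the $(k,l)$ block, $k<l$, on both sides of $\pi(\texttt{g}_1\texttt{g}_2)=\pi(\texttt{g}_1)\pi(\texttt{g}_2)$. Upper-triangularity gives
\begin{align*}
c_{kl}(\texttt{g}_1\texttt{g}_2)=\pi_k(\texttt{g}_1)c_{kl}(\texttt{g}_2)+\sum_{j=k+1}^{l-1}c_{kj}(\texttt{g}_1)c_{jl}(\texttt{g}_2)+c_{kl}(\texttt{g}_1)\pi_l(\texttt{g}_2)\,.
\end{align*}
By \eqref{sum-cn} with $n=1$, this is exactly $\delta c_{kl}=-\sum_j c_{kj}\times c_{jl}$, with the product taken as in \eqref{pro-eq}. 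The diagonal blocks reproduce the fact that each $\pi_j$ is a representation. In particular $c_{k,k+1}$ is a $1$-cocycle, so its class in $H^1(\pi_{k+1},\pi_k)$ is well defined.

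\textbf{Step 2 (condition 1).} Suppose $\mathfrak{H}_k$ has a $G$-invariant complement in $\mathfrak{H}_{k+1}=\mathfrak{H}_{k-1}\oplus\mathfrak{R}_k\oplus\mathfrak{R}_{k+1}$. Any complement can be written as a graph $\{x+\psi:\psi\in\mathfrak{R}_{k+1}\}$ over $\mathfrak{R}_{k+1}$, where $x=b\psi$ for a linear map $b:\mathfrak{R}_{k+1}\to\mathfrak{H}_k$. Requiring invariance and projecting onto the $\mathfrak{R}_k$ component gives
\begin{align*}
\pi_k(\texttt{g})b_k + c_{k,k+1}(\texttt{g}) = b_k\pi_{k+1}(\texttt{g})\,,
\end{align*}
where $b_k$ is the $\mathfrak{R}_k$-part of $b$. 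By \eqref{sum-cn00} this reads $c_{k,k+1}=-\delta b_k\in B^1$.

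The converse should work as follows. Given $c_{k,k+1}=\delta a$ with $a\in C^0(\pi_{k+1},\pi_k)$, I would conjugate $\pi$ by the unipotent block matrix $1-a$ placed in position $(k,k+1)$. This kills the $(k,k+1)$ entry. One then needs a complement of all of $\mathfrak{H}_k$, not only of $\mathfrak{R}_k$. I would handle this by induction downward through the filtration, but here the statement becomes delicate. A cocycle condition of the form $\delta c_{k-1,k+1}=-c_{k-1,k}\times c_{k,k+1}$ with $c_{k,k+1}$ removed only says that $c_{k-1,k+1}$ is a cocycle, not a coboundary. So the "if" direction, as literally stated, seems to require either the non-splitting hypotheses of the filtration or Araki's precise formulation.

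\textbf{Main obstacle.} The hard step is this converse: turning the vanishing of the $(k,k+1)$ class into a complement of the whole of $\mathfrak{H}_k$, and fitting closedness with respect to $\langle\cdot,\cdot\rangle$ into the argument. My first approach would be to prove the contrapositive of only the needed direction, namely that non-splitting implies $c_{k,k+1}\notin B^1$. This follows cleanly from Step 2: a coboundary lets one build, via the unipotent conjugation, an invariant complement of $\mathfrak{R}_k$ in $\mathfrak{H}_{k+1}/\mathfrak{H}_{k-1}$. Its preimage, together with the minimality of the filtration steps, is then used to contradict non-splitting. For the full equivalence and for the closedness issues, I would defer to \cite[Cor. 4.4]{Araki-1985-97}.
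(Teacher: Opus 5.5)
\textbf{What is correct.} Your Step 1 is fine. Condition 2 is just the $(k,l)$ block of $\pi(\texttt{g}_1\texttt{g}_2)=\pi(\texttt{g}_1)\pi(\texttt{g}_2)$; the paper's footnote does exactly this for $c_{13}$. So condition 2 holds automatically, and the real content is the equivalence between non-splitting and condition 1. The graph computation in Step 2 is also correct, but note which implication it gives: an invariant complement of $\mathfrak{H}_k$ in $\mathfrak{H}_{k+1}$ forces $c_{k,k+1}=-\delta b_k\in B^1$. That is, condition 1 implies non-splitting, which is the ``if'' direction. What you afterwards call the ``if'' direction is actually the ``only if'' one. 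The paper gives no proof of this corollary; it only cites Araki.

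\textbf{The gap: the ``only if'' direction.} In your last paragraph you say that non-splitting $\Rightarrow c_{k,k+1}\notin B^1$ ``follows cleanly from Step 2''. But Step 2 gives the opposite implication, and the preimage argument does not close. The preimage $\widetilde W$ of an invariant complement of $\mathfrak{R}_k$ in $\mathfrak{H}_{k+1}/\mathfrak{H}_{k-1}$ contains $\mathfrak{H}_{k-1}$, so it is not a complement of $\mathfrak{H}_k$. No available hypothesis (there is no ``minimality'' assumption) lets you split $\mathfrak{H}_{k-1}$ off inside $\widetilde W$. The obstruction is exactly the surviving column entries $c_{i,k+1}$, $i<k$, which you yourself noticed.

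In fact, for $k\geqslant 2$ the implication is false as the statement is written. Let $G=\mathbb{R}^2$ act on $\mathbb{C}^4$ by
\[
\pi(s,t)=\begin{pmatrix}1&s&t&\tfrac12(t^2-s^2)\\0&1&0&-s\\0&0&1&t\\0&0&0&1\end{pmatrix},\qquad \langle u,v\rangle=\bar u_1v_4+\bar u_4v_1+\bar u_2v_2-\bar u_3v_3,
\]
with $\mathfrak{H}_j=\operatorname{span}\{e_1,\dots,e_j\}$. This is a finite-dimensional toy version of the paper's own quadruplet: $e_1,e_2,e_3,e_4$ play the roles of $V_g,V^\prime,\overline{V^\prime},V_s$, and $\mathfrak{H}_3=\mathfrak{H}_1^\perp$.
\begin{itemize}
\item The form is invariant and non-degenerate.
\item All $\pi_j$ are trivial, so every $B^1$ vanishes.
\item $c_{12}=s$ and $c_{34}=t$ are non-zero, so the first and last steps are non-split.
\item An invariant line $\mathbb{C}(e_3+\alpha e_1+\beta e_2)$ would require $t+\beta s=0$ for all $(s,t)$, so the middle step is non-split too.
\end{itemize}
Every step is therefore non-split, yet $c_{23}=0\in B^1(\pi_3,\pi_2)$.

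\textbf{What your computation does prove.} It proves, in both directions, either of two reformulations:
\begin{itemize}
\item Subquotient version: $\mathfrak{H}_k/\mathfrak{H}_{k-1}$ has no invariant complement in $\mathfrak{H}_{k+1}/\mathfrak{H}_{k-1}$ if and only if $c_{k,k+1}\notin B^1(\pi_{k+1},\pi_k)$.
\item Column version: $\mathfrak{H}_k$ has no invariant complement in $\mathfrak{H}_{k+1}$ if and only if $\sum_{i\leqslant k}c_{i,k+1}\notin B^1(\pi_{k+1},\pi|_{\mathfrak{H}_k})$.
\end{itemize}
The two agree for $k=1$. That is the only case the paper actually uses: note (1) in the proof of Theorem~\ref{main thoerem}, where $b_{12}$ yields the complement $W=(1-b_{12})\mathfrak{R}_2$. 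Beyond $k=1$, the statement needs one of these reformulations (presumably Araki's precise one) rather than a deferral.
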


The representation $\pi$ in Eq. \eqref{matrix} is said to define a \emph{Gupta-Bleuler triplet in the standard form}, if the following conditions are satisfied:
\begin{enumerate}
    \item{$n=3$, namely, 
    \begin{align}
        \left\{0\right\} \;\;\subset\;\; {\mathfrak{H}}^{}_1 \left(= {\mathfrak{R}}^{}_1\right) \;\;\subset\;\; {\mathfrak{H}}^{}_2 \;\;\subset\;\; {\mathfrak{H}}^{}_3 \left(= {\mathfrak{R}}^{}_1 \oplus {\mathfrak{R}}^{}_2 \oplus {\mathfrak{R}}^{}_3 \right) \,.
    \end{align}}

    \item{
    \begin{align}\label{2=1perp}
        {\mathfrak{H}}^{}_2 = {{\mathfrak{H}}^{}_1}^{\perp} \,.
    \end{align}}

    \item{
    \begin{align}\label{33333}
        {\mathfrak{R}}_3^{} \simeq {\mathfrak{R}}_1^\sharp\,, \quad\mbox{and}\quad \pi_3^{} \simeq \pi_1^\sharp \,.
    \end{align}}
\end{enumerate}
In this case, the representation $\pi$ takes the form
\begin{align}\label{standard}
    \pi(\texttt{g}) &=
    \begin{pmatrix}
        \pi^{}_1(\texttt{g}) & c_{12}(\texttt{g}) & c_{13}(\texttt{g})\\
        0 & \pi^{}_2(\texttt{g}) & c_{23}(\texttt{g})\\
        0 & 0 & \pi_1^\sharp(\texttt{g})
    \end{pmatrix} \,,\\[3pt]
     \pi &= \pi_1^\sharp \longrightarrow \pi^{}_2 \longrightarrow \pi^{}_1 \,,
\end{align}
while the ambient $G$-invariant Hermitian form is given by
\begin{align}
    \left\langle\phi_1+\phi_2+\phi_3 \,,\, \psi_1+\psi_2+\psi_3\right\rangle = \overline{\left\langle \psi_3 \,,\, \phi_1 \right\rangle} + \left\langle \phi_3 \,,\, \psi_1 \right\rangle + \left\langle\phi_2 \,,\, \psi_2\right\rangle \,,
\end{align}
for $\phi_j,\psi_j\in{\mathfrak{R}}^{}_j$, with $j=1,2,3$.

The necessary and sufficient cohomological conditions for the existence of a Gupta-Bleuler triplet in the standard form are provided by Araki's existence theorem.
\begin{Theorem}\label{Theorem1}\cite[Thm. 2]{Araki-1985-97}
    Let $\pi^{}_1$ and $\pi_1^\sharp$ be a pair of adjoint-conjugate irreducible representations, and let $\pi_2$ be a representation admitting a $G$-invariant non-degenerate Hermitian form. Then, a Gupta-Bleuler triplet in the standard form exists if and only if there are non-zero classes $c^{}_{12}\in H^1(\pi^{}_2,\pi^{}_1)$ and $c_{12}^{\ast}\in H^1(\pi_1^\sharp,\pi^{}_2)$ such that
    \begin{align}
        c^{}_{12}\times c_{12}^\ast \,\in\, B_s^2(\pi_1^\sharp,\pi^{}_1) \,.
    \end{align}    
\end{Theorem}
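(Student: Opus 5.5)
The statement is Araki's existence theorem (Theorem \ref{Theorem1}), so I would prove the two implications separately. The necessity direction is read off from the block-matrix form \eqref{standard}. The sufficiency direction is an explicit construction of $\pi$ and of the Hermitian form on $\mathfrak{R}_1\oplus\mathfrak{R}_2\oplus\mathfrak{R}_1^\sharp$.

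\emph{Necessity.} Assume a standard Gupta-Bleuler triplet is given. The homomorphism law $\pi(\texttt{g}_1\texttt{g}_2)=\pi(\texttt{g}_1)\pi(\texttt{g}_2)$, compared entry by entry in \eqref{standard}, gives
\begin{align*}
\delta c_{12}=0,\qquad \delta c_{23}=0,\qquad \delta c_{13}=-c_{12}\times c_{23}.
\end{align*}
Next I would impose invariance of the form $\langle\pi(\texttt{g})\cdot,\pi(\texttt{g})\cdot\rangle=\langle\cdot,\cdot\rangle$. In block terms, where only the pairings $\mathfrak{R}_3\times\mathfrak{R}_1$ and $\mathfrak{R}_2\times\mathfrak{R}_2$ survive, the $(2,3)$ entry gives $c_{23}(\texttt{g})=c_{12}^\ast(\texttt{g})$ in the sense of \eqref{eq:3.1}. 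Here one has to check that the inverse-order convention matches, using $\pi(\texttt{g})^{-1}=\pi(\texttt{g}^{-1})$. The $(3,3)$ entry then forces $c_{13}$ to be self-adjoint up to a correction built from $c_{12}^\ast c_{12}$. Rewriting this as a statement about the $2$-cochain $c_{12}\times c_{12}^\ast$ by means of \eqref{maryam}, it says $c_{12}\times c_{12}^\ast=-\delta c_{13}$ with $c_{13}$ replaceable by a self-adjoint $1$-cochain. This is exactly $c_{12}\times c_{12}^\ast\in B_s^2$. The non-triviality of the two classes follows from Corollary \ref{Prop}: no invariant complement means $c_{12}\notin B^1$ and $c_{23}\notin B^1$.

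\emph{Sufficiency.} Given cocycles $c_{12}$ and $c_{12}^\ast$ with $c_{12}\times c_{12}^\ast=\delta b$ and $b$ self-adjoint, I would set $c_{23}:=c_{12}^\ast$ and $c_{13}:=-b+\tfrac12(\text{correction})$, where the correction makes the $(3,3)$ invariance identity hold. I then verify three things:
\begin{enumerate}
\item the cocycle identities, so that $\pi$ is a representation;
\item invariance of the prescribed Hermitian form, which is non-degenerate because $\pi_1,\pi_1^\sharp$ separate and the form on $\mathfrak{R}_2$ is non-degenerate;
\item $\mathfrak{H}_2=\mathfrak{H}_1^\perp$, which holds because $\mathfrak{R}_3$ pairs non-degenerately with $\mathfrak{R}_1$.
\end{enumerate}
The absence of invariant complements follows from the converse in Corollary \ref{Prop}, using the non-zero classes together with the irreducibility of $\pi_1$ and $\pi_1^\sharp$. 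Irreducibility is what rules out partial splittings inside $\mathfrak{R}_1$ or $\mathfrak{R}_3$.

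The main obstacle is the bookkeeping in the $(3,3)$ block. There I must match Araki's adjoint convention $c^\ast(\texttt{g}_n^{-1},\dots,\texttt{g}_1^{-1})$ with the sign in \eqref{maryam}, so that the invariance condition is exactly self-adjointness of the primitive of $c_{12}\times c_{12}^\ast$. To get this right, I would first work it out for the $1$-cochain $\texttt{g}\mapsto c_{13}(\texttt{g})\pi_1^\sharp(\texttt{g})^{-1}$, which is the natural combination, and only then translate the result into the $\delta$ language.
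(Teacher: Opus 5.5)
The paper does not prove this statement. It quotes it from Araki \cite[Thm.~2]{Araki-1985-97}. Its mechanism appears only in the proof of Theorem~\ref{main thoerem}, which runs the necessity half in the concrete Krein setting. There, $c_{23}=c_{12}^\ast$ comes from the off-diagonal block of invariance (Eq.~\eqref{Alan-1}), $\delta c_{13}=-c_{12}\times c_{12}^\ast$ comes from the representation law (Eq.~\eqref{5-re}), and non-triviality comes from the complement and adjunction arguments. Your necessity argument follows the same route, and your sufficiency construction is the natural one, so the outline is sound. The one computation you defer is the heart of the theorem, however, and your sufficiency step gets it wrong: no correction term is needed.

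With $\pi_3=\pi_1^\sharp$, $c_{23}=c_{12}^\ast$ and the standard-form inner product, take the $\mathfrak{R}_3\times\mathfrak{R}_3$ component of invariance and move every operator to the right slot using \eqref{conj-def}, \eqref{eq:3.1} and separation. It becomes
\begin{align*}
\pi_1(\texttt{g}^{-1})\,c_{13}(\texttt{g}) + c_{13}^\ast(\texttt{g}^{-1})\,\pi_1^\sharp(\texttt{g}) + c_{12}(\texttt{g}^{-1})\,c_{12}^\ast(\texttt{g}) = 0 \,.
\end{align*}
The cocycle relation $\delta c_{13}=-c_{12}\times c_{12}^\ast$ evaluated at $(\texttt{g}^{-1},\texttt{g})$ reads
\begin{align*}
\pi_1(\texttt{g}^{-1})\,c_{13}(\texttt{g}) - c_{13}(\texttt{e}) + c_{13}(\texttt{g}^{-1})\,\pi_1^\sharp(\texttt{g}) + c_{12}(\texttt{g}^{-1})\,c_{12}^\ast(\texttt{g}) = 0 \,,
\end{align*}
where $\texttt{e}$ is the identity of $G$. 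Since $c_{13}(\texttt{e})=0$, subtracting the two shows that, given the cocycle relation, $(3,3)$-invariance is exactly $c_{13}^\ast=c_{13}$.

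This has two consequences. In the necessity direction, $-c_{13}$ itself is the self-adjoint primitive, with nothing to replace. In the sufficiency direction, $c_{13}:=-b$ works as it stands; note that $b(\texttt{e})=(\delta b)(\texttt{e},\texttt{e})=c_{12}(\texttt{e})c_{12}^\ast(\texttt{e})=0$.

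Any term you add to $-b$ must be a self-adjoint $1$-cocycle. If it is not a cocycle, the $(1,3)$ entry of $\pi(\texttt{g}_1\texttt{g}_2)=\pi(\texttt{g}_1)\pi(\texttt{g}_2)$ fails, and a term ``built from $c_{12}c_{12}^\ast$'' is generically not a cocycle. In your variable $c_{13}(\texttt{g})\pi_1^\sharp(\texttt{g})^{-1}$, the condition does look like a pointwise hermiticity condition with inhomogeneous term $-c_{12}(\texttt{g})c_{12}^\ast(\texttt{g}^{-1})$, which is presumably why you expected a correction. Once you use the inversion built into Araki's ${}^\ast$, the cocycle relation absorbs that term entirely.

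For necessity alone you can bypass the $(3,3)$ block. By \eqref{khodshifte}, $c_{12}\times c_{12}^\ast$ is self-adjoint, and by \eqref{maryam}, $(\delta a)^\ast=\delta(a^\ast)$ for $1$-cochains. Hence $-\tfrac12(c_{13}+c_{13}^\ast)$ is a self-adjoint primitive. This is what implicitly justifies the passage to $B_s^2$ in Eq.~\eqref{5-re}, where self-adjointness of $c_{13}$ is never checked.

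Two smaller points.

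\textbf{Irreducibility is not what excludes invariant complements} under the definition used here. Consider a graph $\{\xi+\beta_1\xi+\beta_2\xi\,;\ \xi\in\mathfrak{R}_3\}$ with $\beta_i\colon\mathfrak{R}_3\to\mathfrak{R}_i$. The $\mathfrak{R}_2$-component of its invariance reads $c_{23}=\delta(-\beta_2)$. So $c_{23}\notin B^1$ alone rules out a complement of $\mathfrak{H}_2$ in $\mathfrak{H}_3$, just as $c_{12}\notin B^1$ rules out one of $\mathfrak{H}_1$ in $\mathfrak{H}_2$.

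\textbf{Non-triviality of $c_{23}$ in necessity should come from adjunction.} The absence of a complement of $\mathfrak{H}_2$ in $\mathfrak{H}_3$ only says that the pair $(c_{13},c_{23})$ has no primitive. So you should derive $c_{23}\notin B^1(\pi_1^\sharp,\pi_2)$ from $c_{12}\notin B^1(\pi_2,\pi_1)$ by adjunction via \eqref{maryam}, as in note (2) of Step~IV, rather than directly from the no-complement condition.
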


Thus, under the hypotheses of the theorem, the existence of a Gupta-Bleuler triplet is completely characterized by two non-trivial first cohomology classes whose product is a self-adjoint coboundary.

\section{Araki Cohomology of the dS Scalar Discrete Series $\Pi_{p,0}$}\label{Sec 3}

Having established the general cohomological framework, we now apply it to the scalar discrete-series realization. After briefly recalling the dS-covariant indecomposable Krein-Gupta-Bleuler construction of Ref. \cite{Bajalan2026}, we identify its associated Araki filtration and extension classes, first in the standard Gupta-Bleuler triplet and then in its refined four-layer realization.

\begin{table*}[t]
    \centering
    \renewcommand{\arraystretch}{1.4}
    \setlength{\tabcolsep}{16pt}
    \begin{tabular}{llll}
    \hline\hline
    \noalign{\vskip 3pt}
    \text{Space} & \text{dS-invariant hull} & \text{KG orthogonality} & \text{KG pairing} \\[3pt]
    \hline\hline \noalign{\vskip 3pt}
    $V_g = \overline{V_g}$ & $V_g$ & ${V_g}^{\perp_{\vphantom{}_{\mathrm{KG}}}} = V^\prime \oplus \overline{V^\prime} \oplus V_g$ & $\langle \phi^{(g)}_{L<p} \,,\, \phi^{(s)}_{L^\prime <p} \rangle_{\vphantom{}_{\mathrm{KG}}} = \delta_{LL^\prime}$ \\[3pt]
    $= \operatorname{Span}_{\mathbb{C}}\left\{\phi^{(g)} := \phi^{(g)}_{L<p}\right\}$&  & $({V_g}^{\perp_{\vphantom{}_{\mathrm{KG}}}})^{\perp_{\vphantom{}_{\mathrm{KG}}}} = V_g$ & $\langle \phi^{(g)}_{L<p} \,,\, \phi^{(g)}_{L^\prime <p} \rangle_{\vphantom{}_{\mathrm{KG}}} = 0$ \\[3pt]
    \hline \noalign{\vskip 3pt}
    $V^\prime \, \left(=: V_{\mathrm{phys}}\right)$ & $V^\prime \oplus V_g$ & ${V^\prime}^{\perp_{\vphantom{}_{\mathrm{KG}}}} = \overline{V^\prime} \oplus V_g \oplus V_s$ & $\langle \phi^{}_{L\geqslant p} \,,\, \phi^{}_{L^\prime\geqslant p} \rangle_{\vphantom{}_{\mathrm{KG}}} = \delta_{LL^\prime}$ \\[3pt]
    $= \operatorname{cl}_{\mathrm{KG}} \operatorname{Span}_{\mathbb{C}}\left\{\phi^{} := \phi^{}_{L\geqslant p}\right\}$&  &  $({V^\prime}^{\perp_{\vphantom{}_{\mathrm{KG}}}})^{\perp_{\vphantom{}_{\mathrm{KG}}}} = V^\prime$ & $\langle \phi^{}_{L\geqslant p} \,,\, \overline{\phi^{}_{L^\prime\geqslant p}} \rangle_{\vphantom{}_{\mathrm{KG}}} = 0$ \\[3pt]
    \hline \noalign{\vskip 3pt}
    $V = V^\prime\oplus V_g$ & $V$ & & $\langle \phi^{}_{L\geqslant p} \,,\, \phi^{(g)}_{L^\prime < p} \rangle_{\vphantom{}_{\mathrm{KG}}} = 0$ \\[3pt]
    & & &  \\[3pt]
    \hline \noalign{\vskip 3pt}
    $\overline{V^\prime}$ & $\overline{V^\prime}\oplus \overline{V_g}$ & $\overline{V^\prime}^{\perp_{\vphantom{}_{\mathrm{KG}}}} = V^\prime \oplus V_g \oplus V_s$ & $\langle \overline{\phi^{}_{L\geqslant p}} \,,\, -\overline{\phi^{}_{L^\prime\geqslant p}} \rangle_{\vphantom{}_{\mathrm{KG}}} = \delta_{LL^\prime}$ \\[3pt]
    $= \operatorname{cl}_{\mathrm{KG}} \operatorname{Span}_{\mathbb{C}}\left\{ \overline{\phi^{}} := \overline{\phi^{}_{L\geqslant p}}\right\}$ &  & $(\overline{V^\prime}^{\perp_{\vphantom{}_{\mathrm{KG}}}})^{\perp_{\vphantom{}_{\mathrm{KG}}}} = \overline{V^\prime}$ & $\langle \overline{\phi^{}_{L\geqslant p}} \,,\, \phi^{}_{L^\prime \geqslant p} \rangle_{\vphantom{}_{\mathrm{KG}}} = 0$ \\[3pt]
    \hline\noalign{\vskip 3pt}
    $V_s = \overline{V_s}$ & $V_s \oplus \overline{V^\prime} \oplus V^\prime \oplus V_g$ & ${V_s}^{\perp_{\vphantom{}_{\mathrm{KG}}}} = V^\prime \oplus \overline{V^\prime} \oplus V_s$ & $\langle \phi^{(s)}_{L<p} \,,\, \phi^{(g)}_{L^\prime<p} \rangle_{\vphantom{}_{\mathrm{KG}}} = \delta_{LL^\prime}$\\[3pt]
    $= \operatorname{Span}_{\mathbb{C}}\left\{\phi^{(s)} := \phi^{(s)}_{L<p}\right\}$&  & $({V_s}^{\perp_{\vphantom{}_{\mathrm{KG}}}})^{\perp_{\vphantom{}_{\mathrm{KG}}}} = V_s$ & $\langle \phi^{(s)}_{L< p} \,,\, \phi^{(s)}_{L^\prime < p} \rangle_{\vphantom{}_{\mathrm{KG}}} = 0$ \\[3pt]
    \hline \noalign{\vskip 3pt}
    $H_+$ & $H_+ \oplus \overline{V^\prime}$ &  & $\langle \phi^{}_{L\geqslant 0} \,,\, \phi^{}_{L^\prime \geqslant 0} \rangle_{\vphantom{}_{\mathrm{KG}}} = \delta_{LL^\prime}$ \\[3pt]
    $= \operatorname{cl}_{\mathrm{KG}} \operatorname{Span}_{\mathbb{C}}\left\{\phi^{}_{L<p} \,,\; \phi^{}_{L\geqslant p}\right\}$ & & & \\[3pt]
    $\phi^{}_{L<p} := \phi^{(g)}_{L<p} + \tfrac{1}{2} \phi^{(s)}_{L<p}$ & & & \\[3pt]
    \hline \noalign{\vskip 3pt}
    $H_-$ &  $H_- \oplus V^\prime$ &  & $\langle \overline{\phi^{}_{L\geqslant 0}} \,,\, - \overline{\phi^{}_{L^\prime \geqslant 0}} \rangle_{\vphantom{}_{\mathrm{KG}}} = \delta_{LL^\prime}$ \\[3pt]
    $= \operatorname{cl}_{\mathrm{KG}} \operatorname{Span}_{\mathbb{C}}\left\{\overline{\phi^{}_{L<p}} \,,\; \overline{\phi^{}_{L\geqslant p}}\right\}$ & & & \\[3pt]
    $\overline{\phi^{}_{L<p}} := \phi^{(g)}_{L<p} - \tfrac{1}{2} \phi^{(s)}_{L<p}$ & & & \\[3pt]
    \hline \noalign{\vskip 3pt}
    $V_{\mathrm{tot}} = H_+\oplus H_-$ & $V_{\mathrm{tot}}$ & & $\langle \phi^{}_{L\geqslant 0} \,,\,  \overline{\phi^{}_{L^\prime \geqslant 0}} \rangle_{\vphantom{}_{\mathrm{KG}}} = 0$ \\[3pt]
    $= \operatorname{cl}_{\mathrm{KG}} \operatorname{Span}_{\mathbb{C}} \left\{\Phi \right\}$ & & & \\[3pt]
    $\left\{\Phi\} := \{\phi^{}_{L\geqslant 0} \,,\; \overline{\phi^{}_{L\geqslant 0}}\right\}$ &  &  & \\[3pt]
    \hline\hline \noalign{\vskip 3pt}
    \end{tabular}
    \caption{Structure of the subspaces entering the dS-covariant indecomposable Krein-Gupta-Bleuler realization, together with their dS-invariant hulls and the relevant KG-orthogonality and pairing relations. Throughout, $\operatorname{Span}_{\mathbb{C}}\{a\}$ denotes the algebraic complex linear span of a family $\{a\}$, whereas $\operatorname{cl}_{\mathrm{KG}}\operatorname{Span}_{\mathbb{C}}\{a\}$ denotes its completion with respect to the corresponding Hilbert norm induced by the KG form. Accordingly, the finite-dimensional spaces $V_g$ and $V_s$ are already complete and are given simply by their algebraic spans, while $V^\prime$, $\overline{V^\prime}$, $H_+$, $H_-$, and $V_{\mathrm{tot}}$ are the corresponding KG-completed spaces. The overline denotes pointwise complex conjugation. The bases $\{\phi^{(g)}\}$ and $\{\phi^{(s)}\}$ of $V_g$ and $V_s$, respectively, are KG-paired, while $\{\phi\}$ and $\{\overline{\phi}\}$ are KG-``orthonormal'' bases of $V^\prime$ and its pointwise complex conjugate $\overline{V^\prime}$, respectively. \textbf{Entries that follow directly from those displayed are omitted.}}
    \label{Table 1}
\end{table*}

\subsection{Construction Recalled from Ref. \cite{Bajalan2026}}

For each $p=1,2,\cdots$, the scalar discrete-series UIR $\Pi_{p,0}$ of the dS group $G = \mathrm{SO}_0(1,4)$ is realized dS-covariantly within an indecomposable representation on a Krein space $V_{\mathrm{tot}}$, endowed with the dS-invariant, non-degenerate Hermitian Klein-Gordon (KG) sesquilinear form \cite{Birrell, Higuchi},
\begin{align}
    \left\langle \phi^{}_1(x) \,,\, \phi^{}_2(x) \right\rangle_{\vphantom{}_{\mathrm{KG}}} = \mathrm{i}\int_{\Sigma} \left(\overline{\phi^{}_1(x)}\, {\partial}_{s}\, \phi^{}_2(x) - \phi^{}_2(x)\, {\partial}_{s}\, \overline{\phi^{}_1(x)} \right) \mathrm{d}\sigma^s \,,
\end{align}
where $\Sigma$ is a Cauchy surface and $\mathrm{d}\sigma^s$ is the corresponding directed surface element. Throughout, an overline denotes pointwise complex conjugation, both for individual vectors and for subspaces.

The Krein space $V_{\mathrm{tot}}$ admits the decomposition
\begin{align}
    V_{\mathrm{tot}} = (V_g\dotplus V_s) \oplus V^\prime \oplus \overline{V^\prime} \,.
\end{align}
Here, $V_g = \overline{V_g}$ denotes the finite-dimensional gauge-state space ($\dim V_g=p(p+1)(2p+1)/6$), while $V_s=\overline{V_s}$ is the supplementary-state space ($\dim V_s=\dim V_g$). Both spaces are degenerate with respect to the KG form. Specifically, let $\left\{\phi^{(g)}_{L<p}\right\}$ and $\left\{\phi^{(s)}_{L<p}\right\}$ denote the bases of $V_g$ and $V_s$, respectively.\footnote{The angular-momentum indices range over $\left\{(L,l,m)\in\mathbb{N}_0\times\mathbb{N}_0\times\mathbb{Z} \;;\; 0\leqslant l\leqslant L \,,\; |m|\leqslant l\right\}$, where $\mathbb{N}_0 = \{0,1,2,\cdots\}$. Since the admissible values of $l$ and $m$ are determined by $L$, we suppress these indices throughout and retain only the principal angular-momentum index $L$.} Their KG norms vanish, 
\begin{align}\label{gs}
    \left\langle \phi^{(g)}_{L<p} \,,\, \phi^{(g)}_{L^\prime <p} \right\rangle_{\vphantom{}_{\mathrm{KG}}} 
    = 0 = 
    \left\langle \phi^{(s)}_{L<p} \,,\, \phi^{(s)}_{L^\prime <p} \right\rangle_{\vphantom{}_{\mathrm{KG}}} \,.
\end{align} 
The two bases are, however, mutually KG-paired,
\begin{align}\label{gs}
    \left\langle \phi^{(g)}_{L<p} \,,\, \phi^{(s)}_{L^\prime <p} \right\rangle_{\vphantom{}_{\mathrm{KG}}} = \delta_{LL^\prime} = \left\langle \phi^{(s)}_{L<p} \,,\, \phi^{(g)}_{L^\prime <p} \right\rangle_{\vphantom{}_{\mathrm{KG}}} \,.
\end{align} 
Indeed, the finite-dimensional spaces $V_g$ and $V_s$ carry mutually adjoint-conjugate finite-dimensional representations, each Weyl equivalent to the UIR $\Pi_{p,0}$, in the sense of possessing the same eigenvalues of the dS Casimir operators as $\Pi_{p,0}$.

It then follows that the space
\begin{align}
    V_g\dotplus V_s := \operatorname{Span}_{\mathbb{C}}\left\{\phi^{}_{L<p} := \phi^{(g)}_{L<p} + \tfrac{1}{2} \phi^{(s)}_{L<p} 
    \quad,\quad
    \overline{\phi^{}_{L<p}} := \phi_{L<p}^{(g)} - \tfrac{1}{2}\phi_{L<p}^{(s)}\right\} 
\end{align}
is non-degenerate with respect to the KG form. Indeed,
\begin{align}
    \left\langle \phi^{}_{L<p} \,,\, \phi^{}_{L^\prime<p} \right\rangle_{\vphantom{}_{\mathrm{KG}}} = \delta_{LL^\prime} = \left\langle \overline{\phi^{}_{L<p}} \,,\, - \overline{\phi^{}_{L^\prime<p}} \right\rangle_{\vphantom{}_{\mathrm{KG}}} \,, 
    \quad\mbox{and}\quad 
    \left\langle \phi^{}_{L<p} \,,\, \overline{\phi^{}_{L^\prime<p}} \right\rangle_{\vphantom{}_{\mathrm{KG}}} = 0 \,.
\end{align} 
Note that, throughout what follows, ``$\oplus$'' denotes a KG-orthogonal direct sum, whereas ``$\dotplus$'' denotes a direct sum for which KG orthogonality is not implied.

The subspace $V^\prime$ and its pointwise complex conjugate $\overline{V^\prime}$ are the infinite-dimensional KG-positive and KG-negative sectors, respectively, with bases $\left\{\phi^{}_{L\geqslant p}\right\}$ and $\left\{\overline{\phi^{}_{L\geqslant p}}\right\}$ normalized according to
\begin{align}
    \left\langle \phi^{}_{L\geqslant p} \,,\, \phi^{}_{L^\prime\geqslant p} \right\rangle_{\vphantom{}_{\mathrm{KG}}} = \delta_{LL^\prime} = \left\langle \overline{\phi^{}_{L\geqslant p}} \,,\, -\overline{\phi^{}_{L^\prime\geqslant p}} \right\rangle_{\vphantom{}_{\mathrm{KG}}} \,, 
    \quad\mbox{and}\quad 
    \left\langle \phi^{}_{L\geqslant p} \,,\, \overline{\phi^{}_{L^\prime\geqslant p}} \right\rangle_{\vphantom{}_{\mathrm{KG}}} = 0 \,.
\end{align}
The spaces $V^\prime$ and $\overline{V^\prime}$ carry two equivalent realizations of the UIR $\Pi_{p,0}$, with $V^\prime$, endowed with the positive-definite restriction of the KG form, identified as the physical Hilbert space.

The resulting structure gives rise to the $G=\mathrm{SO}_0(1,4)$-invariant Gupta-Bleuler triplet associated with the indecomposable realization of $\Pi_{p,0}$,
\begin{align}\label{KGB}
    V_g \;\;\subset\;\; V \,\left(:=V_g\oplus V^\prime\right) \;\;\subset\;\; V_{\mathrm{tot}} \,.
\end{align}
We denote by $\left\{\Phi \right\} = \left\{ \phi^{}_{L<p}\,,\; \overline{\phi^{}_{L<p}} \,,\; \phi^{}_{L\geqslant p} \,,\; \overline{\phi^{}_{L\geqslant p}}\right\}$ the KG-orthonormal basis of $V_{\mathrm{tot}}$, with the positive- and negative-norm sectors normalized as above. An explicit construction of these basis elements is given in Ref. \cite{Bajalan2026}. The relevant subspaces, their dS-invariant hulls, and the corresponding KG-orthogonality and pairing relations are summarized in Table \ref{Table 1}.

\subsection{Araki Cohomological Formulation}

In this subsection, we show that the above Krein-Gupta-Bleuler structure admits a natural formulation within the cohomology of group representations in the sense of Araki \cite{Araki-1985-97, Pierotti-1990-31, Simon-1980}.

The Gupta-Bleuler chain \eqref{KGB} is the natural filtration for the construction of the physical quotient $V_{\mathrm{phys}} = V^\prime \simeq V/V_g$. For the comparison with Araki's standard form, however, one must use the filtration determined by the KG-orthogonal complement of the first invariant subspace (see Eq. \eqref{33333}). Accordingly, in view of the invariant subspace structure summarized in Table \ref{Table 1}, we introduce the corresponding Araki filtration (see Fig. \ref{Fig})
\begin{align} \label{32}
    \underbrace{\quad\mathfrak{H}^{}_0\quad}_{:=\, \left\{0\right\}}
    \quad\subset\quad \underbrace{\quad\mathfrak{H}^{}_1\quad}_{:=\, V_g} \quad\left( \subset\quad \underbrace{\quad\mathfrak{H}^{}_{2^\prime}\quad}_{:=\, V_g \oplus V^\prime} \; \right) \quad\subset\quad \underbrace{\quad\mathfrak{H}^{}_2 \; \,\left(= {\mathfrak{H}}_{{\overline{2}^\prime}}\right) \quad}_{:=\, V_g \oplus V^\prime \oplus \overline{V^\prime}} \quad\subset\quad \underbrace{\quad\mathfrak{H}^{}_3\quad}_{=:\, V_{\mathrm{tot}}} \quad \,.
\end{align}
Note that: (i) $\mathfrak{H}^{}_2 = V_g^{\perp_{\vphantom{}_{\mathrm{KG}}}}$, so that the filtration satisfies the orthogonality condition \eqref{2=1perp} of Araki's standard Gupta-Bleuler triplet. (ii) For the refined filtration, we use the auxiliary labels $\mathfrak{H}^{}_{2^\prime}$ and $\mathfrak{H}_{\overline{2}^\prime} := \mathfrak{H}^{}_2$. Using Table \ref{Table 1}, one readily verifies that ${\mathfrak{H}}^{}_1$, $\mathfrak{H}^{}_{2^\prime}$, ${\mathfrak{H}}^{}_2 = {\mathfrak{H}}_{{\overline{2}^\prime}}$, and ${\mathfrak{H}}^{}_3$ are closed $G=\mathrm{SO}_0(1,4)$-invariant subspaces with respect to the KG form, and that each successive inclusion is non-split, in the sense that the preceding invariant subspace admits no closed $G$-invariant complement in the next. Choosing the corresponding complements as in \eqref{R1+...+Rn''}, one obtains
\begin{align} \label{Araki-direct-sum-bulk333}
    V_{\mathrm{tot}} \quad=\quad \underbrace{\qquad{\mathfrak{R}}^{}_1\qquad}_{\simeq\, \mathfrak{H}^{}_1 / \mathfrak{H}^{}_0 \,\simeq\, V_g} \quad\bigoplus\quad \underbrace{\qquad{\mathfrak{R}}^{}_2\qquad}_{\simeq\, {\mathfrak{H}}^{}_2/{\mathfrak{H}}^{}_1 \,\simeq\, V^\prime\oplus \overline{V^\prime}} \;\overset{\scalebox{1.6}[1.3]{$\curvearrowright$}}{\Bigg(=} \; \underbrace{\qquad{\mathfrak{R}}^{}_{2^\prime}\qquad}_{\simeq\, {\mathfrak{H}}_{2^\prime}/{\mathfrak{H}}^{}_1 \,\simeq\, V^\prime} \quad\bigoplus\quad \underbrace{\qquad{\mathfrak{R}}_{{\overline{2}^\prime}}\qquad}_{\simeq\,{\mathfrak{H}}_{{\overline{2}^\prime}}/{\mathfrak{H}}^{}_{2^\prime} \,\simeq\, \overline{V^\prime}} \Bigg) \quad\bigoplus\quad \underbrace{\qquad{\mathfrak{R}}^{}_3\qquad}_{\simeq\, {\mathfrak{H}}^{}_3/{\mathfrak{H}}^{}_2 \,\simeq\, V_s} \;\;\,. 
\end{align}

Henceforth, the symbol ``$\,\overset{\scalebox{1.6}[1.3]{$\curvearrowright$}}{}\,$'' denotes passage from the standard three-layer description to its refined four-layer counterpart (see Fig. \ref{Fig}). Specifically, the object immediately preceding the symbol, associated with the unresolved intermediate sector ${\mathfrak{R}}^{}_2$, is replaced by the corresponding pair obtained upon resolving ${\mathfrak{R}}^{}_2={\mathfrak{R}}^{}_{2^\prime}\oplus{\mathfrak{R}}_{{\overline{2}^\prime}} \simeq V^\prime\oplus\overline{V^\prime}$ into two distinct layers, while the gauge and supplementary sectors ${\mathfrak{R}}^{}_1$ and ${\mathfrak{R}}^{}_3$ remain unchanged. This convention applies uniformly to the corresponding sectors, projections, representations, extension maps, and their actions.

\begin{figure}[t]
    \centering
    \begin{tikzpicture}[
        every node/.style={align=center},
        layer/.style={
            draw=black!75,
            thick,
            rounded corners=2pt,
            minimum width=6.0cm,
            minimum height=1.05cm,
            inner sep=0pt
        },
        >=Stealth
    ]

    \node[layer] (R3) at (0,1.575)
        {$\mathfrak{R}_3 \simeq V_s$};

    \node[layer] (R2b) at (0,0.525)
        {$\mathfrak{R}_{{\overline{2}^\prime}}
        \simeq \overline{V^\prime}$};

    \node[layer] (R2p) at (0,-0.525)
        {$\mathfrak{R}_{2^\prime}
        \simeq V^\prime$};

    \node[layer] (R1) at (0,-1.575)
        {$\mathfrak{R}_1 \simeq V_g$};

    \draw[<->, thick]
        (R3.west)
        .. controls (-4.15,1.575) and (-4.15,-1.575) ..
        (R1.west);

    \node[anchor=east] at (-3.99,0)
        {KG pairing};

    \draw[
        decorate,
        decoration={brace,amplitude=6pt},
        thick
    ]
        (3.25,1.02) -- (3.25,-1.02)
        node[midway,right=10pt]
        {$\mathfrak{R}_2
        =
        \mathfrak{R}_{2^\prime}
        \oplus
        \mathfrak{R}_{{\overline{2}^\prime}}$};

    \node at (0,-2.55)
        {$V_{\mathrm{tot}}
        =
        \mathfrak{R}_1
        \oplus
        \mathfrak{R}_{2^\prime}
        \oplus
        \mathfrak{R}_{{\overline{2}^\prime}}
        \oplus
        \mathfrak{R}_3$};

    \end{tikzpicture}

    \caption{Schematic representation of the refined Araki decomposition for the dS scalar discrete-series realization $\Pi_{p,0}$. The intermediate sector $\mathfrak{R}_2 = V^\prime \oplus \overline{V^\prime}$ of the standard Gupta-Bleuler triplet is resolved into the two irreducible layers $\mathfrak{R}_{2^\prime} = V^\prime$ and $\mathfrak{R}_{{\overline{2}^\prime}} = \overline{V^\prime}$. The outer layers $\mathfrak{R}_1 = V_g$ and $\mathfrak{R}_3 = V_s$ are mutually adjoint-conjugate and non-degenerately paired by the KG form.}
    \label{Fig}
\end{figure}

\begin{Theorem}\label{main thoerem}

Let $\pi\colon\; G = \mathrm{SO}_0(1,4)\longrightarrow\mathrm{GL}(V_{\mathrm{tot}})$ denote the representation associated with the indecomposable realization of the scalar discrete-series UIR $\Pi_{p,0}$, for a given $p=1,2,\cdots$. Then, $\pi$ admits a Gupta-Bleuler triplet in the standard form \eqref{standard}, together with a refined four-layer realization obtained by resolving its reducible intermediate sector into its two irreducible constituents. The corresponding three- and four-layer realizations are displayed below, respectively:
\begin{align}
    \pi(\texttt{g}) =
    \begin{pmatrix}
        \pi^{}_1 & c_{12} & c_{13} \\
            0 & \pi_2 & c_{23} \\
            0 & 0 & \pi_1^\sharp
    \end{pmatrix} \,, 
    \qquad\;\overset{\scalebox{2.6}[2.3]{$\curvearrowright$}}{\qquad^{}} \qquad \pi=\;\vcenter{\hbox{%
\begin{tikzpicture}[baseline=(m.center)]
  \matrix (m) [
    matrix of math nodes, 
    left delimiter={(}, 
    right delimiter={)}, 
    nodes in empty cells,
    nodes={inner sep=3pt, anchor=center},
    ampersand replacement=\&,
    column 1/.style={nodes={minimum width=1.0cm}},
    column 2/.style={nodes={minimum width=1.0cm}},
    column 3/.style={nodes={minimum width=2.0cm}},
    column 4/.style={nodes={minimum width=1.4cm}}
  ] {
    \pi^{}_1\& c_{1{2^\prime}} \& c_{1{\overline{2}^\prime}} \& c_{13} \\
    0 \& \pi^{}_{2^\prime} \& c_{{2^\prime}{{\overline{2}^\prime}}}=0 \& c_{{2^\prime}3} \\
    0 \& 0 \& \pi_{{\overline{2}^\prime}} \& c_{{\overline{2}^\prime} 3} \\
    0 \& 0 \& 0 \& \pi_1^\sharp  \\
  };
  \draw[dashed, gray] (m-1-1.north east) -- (m-4-1.south east);
  \draw[dashed, gray] (m-1-3.north east) -- (m-4-3.south east);
  \draw[dashed, gray] (m-1-1.south west) -- (m-1-4.south east);
  \draw[dashed, gray] (m-3-1.south west) -- (m-3-4.south east);
\end{tikzpicture}
}}\,,
\end{align}
\begin{align}
\pi = \pi_1^\sharp \longrightarrow \pi_2 \longrightarrow \pi^{}_1\,,
\qquad\;\overset{\scalebox{2.6}[2.3]{$\curvearrowright$}}{\qquad^{}} \qquad
\pi = \vcenter{\hbox{%
  \xymatrix@C=1.5em@R=1em{
    & \pi^{}_{2^\prime} \ar[dr] & \\
    \pi_1^\sharp \ar[ur] \ar[dr] & & \pi^{}_1\\
    & \pi_{{\overline{2}^\prime}} \ar[ur] &
  }%
}}\,, \hspace{3.1cm}
\end{align}
such that,
\begin{align} \label{yields}
    0 \neq  c_{12} \in H^1(\pi^{}_2, \pi^{}_1) \,, \qquad\;\overset{\scalebox{2.6}[2.3]{$\curvearrowright$}}{\qquad^{}}\qquad 
    \begin{cases}
        0 \neq c_{12^\prime}\in H^1(\pi^{}_{2^\prime},\pi^{}_1) \,, \\[5pt]
        0 \neq c_{1{{\overline{2}^\prime}}}\in H^1(\pi_{{\overline{2}^\prime}},\pi^{}_1) \,,
    \end{cases} 
\end{align}
\begin{align}
    0 \neq  c_{12}^\ast \in H^1(\pi_1^\sharp,\pi^{}_2) \,, \qquad\;\overset{\scalebox{2.6}[2.3]{$\curvearrowright$}}{\qquad^{}}\qquad
    \begin{cases}
        0 \neq c^\ast_{12^\prime} \in H^1(\pi^{\sharp}_1, \pi^{}_{2^\prime}) \,, \\[5pt]
        0 \neq c^\ast_{1{{\overline{2}^\prime}}} \in H^1(\pi^{\sharp}_1, \pi_{{\overline{2}^\prime}}) \,,
    \end{cases} 
\end{align}
\begin{align}
    c^{}_{12} \times c_{12}^\ast \in B_s^2(\pi_1^\sharp, \pi^{}_1) \,, \qquad\;\overset{\scalebox{2.6}[2.3]{$\curvearrowright$}}{\qquad^{}}\qquad c^{}_{12^\prime_{}}\times c_{12^\prime_{}}^{\ast} + c^{}_{1{\overline{2}^\prime}}\times c_{1{\overline{2}^\prime}}^{\ast} \in B_s^2(\pi_1^\sharp ,\, \pi^{}_1) \,.
\end{align}
Here, $\pi^{}_{2^\prime}$ and $\pi_{{\overline{2}^\prime}}$ denote two equivalent realizations of the UIR $\Pi_{p,0}$, carried by $V^\prime$ and $\overline{V^\prime}$, respectively. By contrast, $\pi^{}_1$ and $\pi_1^\sharp$ denote mutually adjoint-conjugate finite-dimensional representations, carried by $V_g$ and $V_s$, respectively, each Weyl equivalent to $\Pi_{p,0}$, in the sense of possessing the same eigenvalues of the dS Casimir operators as $\Pi_{p,0}$.

\end{Theorem}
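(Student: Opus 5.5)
Our plan is to read all extension data directly off the KG geometry of $V_{\mathrm{tot}}$ in Table \ref{Table 1}, and then verify Araki's conditions. First we fix the decomposition \eqref{Araki-direct-sum-bulk333} with $\mathfrak{R}_1=V_g$, $\mathfrak{R}_{2^\prime}=V^\prime$, $\mathfrak{R}_{\overline{2}^\prime}=\overline{V^\prime}$, $\mathfrak{R}_3=V_s$. Writing $P_1,P_{2^\prime},P_{\overline{2}^\prime},P_3$ for the associated projections, we define $\pi_j(\texttt{g}):=P_j\pi(\texttt{g})|_{\mathfrak{R}_j}$ and $c_{ij}(\texttt{g}):=P_i\pi(\texttt{g})|_{\mathfrak{R}_j}$ for $i<j$. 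These projections can be written explicitly through the KG form. For instance, $P_1\xi=\sum_L\langle\phi^{(s)}_L,\xi\rangle\phi^{(g)}_L$ after correcting for the $V^\prime,\overline{V^\prime}$ components, $P_{2^\prime}\xi=\sum_{L\geqslant p}\langle\phi_L,\xi\rangle\phi_L$, and $P_3\xi=\sum_L\langle\phi^{(g)}_L,\xi\rangle\phi^{(s)}_L$.

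Invariance of the filtration \eqref{32} (the invariant hulls in Table \ref{Table 1}) gives the upper-triangular form. It also gives $\pi_{2^\prime}$ and $\pi_{\overline{2}^\prime}$ as the restrictions of $\pi$ to $V^\prime$ and $\overline{V^\prime}$ modulo $V_g$. Since the hull of $\overline{V^\prime}$ is $\overline{V^\prime}\oplus V_g$ and does not contain $V^\prime$, we get $c_{2^\prime\overline{2}^\prime}=0$. Hence $\pi_2=\pi_{2^\prime}\oplus\pi_{\overline{2}^\prime}$, and $c_{12}=(c_{12^\prime},c_{1\overline{2}^\prime})$ and $c_{23}=(c_{2^\prime3},c_{\overline{2}^\prime3})^{T}$ split accordingly.

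Next, the representation law $\pi(\texttt{g}_1\texttt{g}_2)=\pi(\texttt{g}_1)\pi(\texttt{g}_2)$, read blockwise, gives $\delta c_{12}=0$ and $\delta c_{23}=0$. In the refined form it gives $\delta c_{12^\prime}=0$, $\delta c_{1\overline{2}^\prime}=0$, and similarly for the column entries. The $(1,3)$ block gives
\begin{align*}
\delta c_{13}=-c_{12}\times c_{23}=-c_{12^\prime}\times c_{2^\prime3}-c_{1\overline{2}^\prime}\times c_{\overline{2}^\prime3}.
\end{align*}
These are exactly the relations in condition 2 of Corollary \ref{Prop}.

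To identify $c_{23}$ with $c_{12}^\ast$, we use KG invariance, $\langle\pi(\texttt{g})\xi,\eta\rangle=\langle\xi,\pi(\texttt{g}^{-1})\eta\rangle$. Taking $\xi\in V_s$ and $\eta\in V^\prime$ (or $\overline{V^\prime}$), and using that $V_s$ pairs only with $V_g$, we obtain Eq. \eqref{eq:3.1}: $\langle\phi,c_{12}(\texttt{g})\psi\rangle=\langle c_{23}(\texttt{g}^{-1})\phi,\psi\rangle$. Hence $c_{23}=c_{12}^\ast$, and componentwise $c_{2^\prime3}=c_{12^\prime}^\ast$ and $c_{\overline{2}^\prime3}=c_{1\overline{2}^\prime}^\ast$. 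The sign $-1$ of the KG form on $\overline{V^\prime}$ enters this step and must be tracked carefully.

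The same pairing of $V_g$ with $V_s$ shows that $\pi_3$ is adjoint conjugate to $\pi_1$: $\langle\pi_3(\texttt{g})\phi,\pi_1(\texttt{g})\psi\rangle=\langle\phi,\psi\rangle$, and the pairing is non-degenerate because the bases are dual. The form $\langle\cdot,\cdot\rangle$ on $V_{\mathrm{tot}}$ takes the standard shape once the non-orthogonality of $V_g\dotplus V_s$ is accounted for. We then have to show that $c_{13}$ can be adjusted to be self-adjoint up to the required sign. Using KG invariance on $V_s\times V_s$, we get $\langle\phi,c_{13}(\texttt{g})\psi\rangle=\langle c_{13}(\texttt{g}^{-1})\phi,\psi\rangle$ up to the Hermitian-symmetrization correction coming from $\langle\phi^{(s)},\phi^{(s)}\rangle=0$. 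Combined with Eq. \eqref{maryam}, this gives
\begin{align*}
c_{12}\times c_{12}^\ast=-\delta c_{13}\in B_s^2(\pi_1^\sharp,\pi_1).
\end{align*}
The refined identity for $c_{12^\prime}\times c_{12^\prime}^\ast+c_{1\overline{2}^\prime}\times c_{1\overline{2}^\prime}^\ast$ is then the same equation with the sum split into its two components.

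Finally, non-triviality follows from Corollary \ref{Prop}, condition 1. Suppose $c_{12^\prime}=\delta b$ for some $b\colon V^\prime\to V_g$. Then $\{\psi-b\psi:\psi\in V^\prime\}$ would be a $G$-invariant complement of $V_g$ in $\mathfrak{H}_{2^\prime}$, closed because it is the graph of a map into a finite-dimensional space. This contradicts the non-split property recorded after \eqref{32}. The same argument applies to $c_{1\overline{2}^\prime}$, and hence to $c_{12}$. Non-triviality of the adjoint classes then follows from Eq. \eqref{maryam} and $(c^\ast)^\ast=c$. Indeed, if $c_{12}^\ast=\delta b$, then $c_{12}=\pm\delta(b^\ast)$, where $b^\ast$ exists by the non-degenerate pairings.

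\textbf{Main obstacle.} I expect the non-split assertion itself to be the hardest step, namely that $V_g$ has no invariant complement in $V_g\oplus V^\prime$. It must ultimately come from Ref. \cite{Bajalan2026}: the dS boosts map $\phi_{L\geqslant p}$ with nonzero $V_g$-components, via the ladder action $L=p\to L=p-1$. I would verify this by checking that a nonzero matrix element of a boost generator from $\phi_{L=p}$ into $\phi^{(g)}_{L=p-1}$ cannot be removed by any intertwiner $b$. Since $\pi_{2^\prime}$ is irreducible and infinite-dimensional while $\pi_1$ is finite-dimensional, any $\mathfrak{so}(1,4)$-equivariant $b$ vanishes at the Lie-algebra level. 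The cocycle is therefore a coboundary only if it is zero, which it is not.
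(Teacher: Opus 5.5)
Your main line is the paper's own proof:
\begin{itemize}
\item the same projections $P_1,P_{2^\prime},P_{\overline{2}^\prime},P_3$;
\item the block-wise reading of $\pi(\texttt{g}_1\texttt{g}_2)=\pi(\texttt{g}_1)\pi(\texttt{g}_2)$;
\item the identification $c_{23}=c_{12}^\ast$ from KG invariance;
\item the graph $\{\psi-b\psi\}$ as a would-be invariant complement contradicting the recorded non-split property;
\item Eq.~\eqref{maryam} to transfer non-triviality to the adjoint classes.
\end{itemize}
You are more careful than the paper on one point. The paper never checks that $c_{13}$ is self-adjoint, which is what $B_s^2=\delta C_s^1$ literally requires. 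Your check is in fact exact, with no correction term. For $\phi,\psi\in V_s$, isotropy of $V_s$ and $V_s\perp V^\prime\oplus\overline{V^\prime}$ give $\langle\phi,\pi(\texttt{g})\psi\rangle=\langle\phi,c_{13}(\texttt{g})\psi\rangle$, and likewise on the other side. Hence $c_{13}^\ast=c_{13}$, and $c_{12}\times c_{12}^\ast=\delta(-c_{13})$ with $-c_{13}\in C_s^1$. Separately, the graph of $b$ is closed only if $b$ is continuous; a finite-dimensional target is not enough.

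The real problem is your last paragraph. That $\mathfrak{so}(1,4)$-equivariant maps $V^\prime\to V_g$ vanish only says that $\delta$ is injective on $C^0$. An equivariant $b$ has $\delta b=0$ anyway, so this says nothing about whether $c_{12^\prime}=\delta b$ for some non-equivariant $b$. Hence ``a coboundary only if it is zero'' does not follow.

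The missing ingredient is the maximal compact subgroup $K=\mathrm{SO}(4)$:
\begin{itemize}
\item As the $(L,l,m)$ labelling indicates, each layer $V_g,V_s,V^\prime,\overline{V^\prime}$ is a sum of complete fixed-$L$ multiplets. So $\pi(k)$ commutes with all $P_i$, and $c_{12^\prime}(k)=0$ for $k\in K$.
\item If $c_{12^\prime}=\delta b$, then $\pi_1(k)b=b\,\pi_{2^\prime}(k)$, so $b$ intertwines the $K$-actions.
\item The $K$-types of $V^\prime$ ($L\geqslant p$) and of $V_g$ ($L<p$) are disjoint. So $b$ kills all $K$-finite vectors, hence $b=0$ for continuous $b$, and therefore $c_{12^\prime}\equiv0$.
\end{itemize}
Non-triviality of $c_{12^\prime}$, and likewise of $c_{1\overline{2}^\prime}$ and $c_{12}$, thus reduces to $c_{12^\prime}\not\equiv0$. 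That follows from a non-vanishing boost matrix element $\phi_{L=p}\to\phi^{(g)}_{L=p-1}$, which is exactly the input you identified. Via this $K$-argument, the conclusion no longer depends on the non-split assertion that both you and the paper take from Table~\ref{Table 1}.
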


\begin{proof}

\textbf{Step I.} Let $P_1$, $P_2 \overset{\scalebox{1.6}[1.3]{$\curvearrowright$}}{\big(=}P_{2^\prime} + P_{{\overline{2}^\prime}})$, and $P_3$ denote the projections associated with the decomposition \eqref{Araki-direct-sum-bulk333}. Explicitly (see Table \ref{Table 1}),
\begin{align}
    P_1 &\colon\; V_{\mathrm{tot}} \; \longrightarrow\; V_g \,, \qquad P_1 \Phi = \sum_{L=0}^{p-1} \left\langle \Phi \,,\, \phi^{(s)}_L \right\rangle_{\vphantom{}_{\mathrm{KG}}} \;\phi^{(g)}_L \,, \\[3pt]
    P_2 \overset{\scalebox{1.6}[1.3]{$\curvearrowright$}}{\big(=}P_{2^\prime} + P_{{\overline{2}^\prime}}) &\colon\; V_{\mathrm{tot}} \; \longrightarrow\; V^\prime\oplus \overline{V^\prime} \,, \qquad P_2 \Phi = \underbrace{\sum_{L=p}^\infty \left\langle \Phi \,,\, \phi^{}_L \right\rangle_{\vphantom{}_{\mathrm{KG}}} \;\phi^{}_L}_{P^{}_{2^\prime} \colon\; V_{\mathrm{tot}} \; \longrightarrow\; V^\prime} \;+\;\, \underbrace{\sum_{L=p}^\infty \left\langle \Phi \,,\, -\overline{\phi^{}_L} \right\rangle_{\vphantom{}_{\mathrm{KG}}} \;\overline{\phi^{}_L}}_{P_{{\overline{2}^\prime}} \colon\; V_{\mathrm{tot}} \; \longrightarrow\; \overline{V^\prime}} \,, \\[5pt] 
    P_3 &\colon\; V_{\mathrm{tot}} \; \longrightarrow\; V_s \,, \qquad P_3 \Phi = \sum_{L=0}^{p-1} \left\langle \Phi \,,\, \phi^{(g)}_L \right\rangle_{\vphantom{}_{\mathrm{KG}}} \;\phi^{(s)}_L \,.
\end{align}
We recall that $\{\phi^{(g)}_{L<p}\}$ and $\{\phi^{(s)}_{L<p}\}$ are KG-paired bases of $V_g$ and $V_s$, respectively, while $\{\phi_{L\geqslant p}\}$ and $\{\overline{\phi_{L\geqslant p}}\}$ are KG-``orthonormal'' bases of $V^\prime$ and $\overline{V^\prime}$, respectively; see Table \ref{Table 1}. We note that every vector $\Phi\in V_{\mathrm{tot}}$ admits the unique decomposition
\begin{align}\label{p1+p2+p3}
    \Phi &=\quad P_1 \Phi \quad+\quad P_2 \Phi \;\overset{\scalebox{1.6}[1.3]{$\curvearrowright$}}{\Big(=} \; P_{2^\prime}\Phi + P_{{\overline{2}^\prime}} \Phi\Big)  \quad+\quad P_3 \Phi \,.
\end{align}
Moreover, the maps $P_i$, $i=1,2 \overset{\scalebox{1.6}[1.3]{$\curvearrowright$}}{\big(=}2^\prime,{\overline{2}^\prime}\big), 3$, satisfy
\begin{align}
    P_iP_j=\delta_{ij}P_i \,,
\end{align}
where $\delta_{ij}$ denotes the Kronecker delta.

\textbf{Step II.} Following Araki's notation, let $\pi \colon\; G = \mathrm{SO}_0(1,4) \longrightarrow \mathrm{GL}(V_{\mathrm{tot}})$ denote the representation associated with the indecomposable scalar discrete-series realization $\Pi_{p,0}$, $p=1,2,\cdots$. With respect to the decomposition \eqref{p1+p2+p3} and the corresponding dS-invariant subspace structure summarized in Table \ref{Table 1}, we obtain
\begin{align}
    \label{la-1} \pi(\texttt{g})P_1 &\;=\quad \underbrace{P_1 \pi(\texttt{g})P_1}_{=:\, \pi^{}_1(\texttt{g})} \quad+\quad \underbrace{ P_2 \pi(\texttt{g})P_1}_{=:\, c_{21}(\texttt{g}) \,=\, 0} \;\overset{\scalebox{1.6}[1.3]{$\curvearrowright$}}{\Bigg(=}\;\underbrace{ P_{2^\prime}\pi(\texttt{g})P_1}_{=:\, c_{2^\prime 1}(\texttt{g})\,=\,0} \;+\; \underbrace{ P_{{\overline{2}^\prime}} \pi(\texttt{g})P_1}_{=:\, c_{{\overline{2}^\prime} 1}(\texttt{g})\,=\,0}\Bigg) \quad+\quad \underbrace{P_3 \pi(\texttt{g})P_1}_{=:\, c_{31}(\texttt{g}) \,=\, 0} \nonumber\\[3pt]
    &\;=\quad \pi^{}_1(\texttt{g}) \,,
\end{align}
\begin{align}   
    \label{la-2} \pi(\texttt{g})P_2 \;\overset{\scalebox{1.6}[1.3]{$\curvearrowright$}}{\Bigg(=} 
    \begin{array}{cc}
        &\pi(\texttt{g}) P_{2^\prime} \\[3pt]
        +\!\!\!\!\!\!& \\[3pt]
        &\pi(\texttt{g}) P_{{\overline{2}^\prime}}     
    \end{array}
    \Bigg) 
    &\;=\quad \underbrace{P_1 \pi(\texttt{g})P_2}_{=:\, c_{12}(\texttt{g})} \;\overset{\scalebox{1.6}[1.3]{$\curvearrowright$}}{\Bigg(=}
    \begin{array}{cc}
        &\underbrace{P_1\pi(\texttt{g})P_{2^\prime}}_{=:\, c_{12^\prime}(\texttt{g})} \\[3pt]
        +\!\!\!\!\!\!& \\[3pt]
        &\underbrace{P_1\pi(\texttt{g})P_{{\overline{2}^\prime}}}_{=:\, c_{1{\overline{2}^\prime}}(\texttt{g})}
    \end{array}
    \Bigg) \quad+\quad \underbrace{P_2 \pi(\texttt{g})P_2}_{=:\, \pi^{}_2(\texttt{g})} \;\overset{\scalebox{1.6}[1.3]{$\curvearrowright$}}{\Bigg(=}
    \begin{array}{cc}
        &\underbrace{P_{2^\prime}\pi(\texttt{g})P_{2^\prime}}_{=:\, \pi^{}_{2^\prime}(\texttt{g})} \;+\; \underbrace{ P_{{\overline{2}^\prime}}\pi(\texttt{g})P_{2^\prime}}_{=:\, c_{{\overline{2}^\prime} 2^\prime}(\texttt{g}) \,=\, 0} \\[3pt]
        +\!\!\!\!\!\!& \\[3pt]
        &\underbrace{ P_{2^\prime}\pi(\texttt{g})P_{{\overline{2}^\prime}}}_{=:\, c_{2^\prime{\overline{2}^\prime}}(\texttt{g}) \,=\, 0} \;+\; \underbrace{ P_{{\overline{2}^\prime}}\pi(\texttt{g})P_{{\overline{2}^\prime}}}_{=:\, \pi_{{\overline{2}^\prime}}(\texttt{g})}
    \end{array} 
    \Bigg)\nonumber\\[8pt] 
    & \qquad + \underbrace{P_3 \pi(\texttt{g})P_2}_{=:\, c_{32}(\texttt{g}) \,=\, 0} \;\overset{\scalebox{1.6}[1.3]{$\curvearrowright$}}{\Bigg(=}
    \begin{array}{cc}
        &\underbrace{ P_3\pi(\texttt{g})P_{2^\prime}}_{=:\, c_{32^\prime}(\texttt{g}) \,=\,0} \\[3pt]
        +\!\!\!\!\!\!& \\[3pt]
        &\underbrace{ P_3\pi(\texttt{g})P_{{\overline{2}^\prime}}}_{=:\, c_{3{\overline{2}^\prime}}(\texttt{g})\,=\,0}
    \end{array}
    \Bigg) \nonumber\\[8pt]
    &=\quad c_{12}(\texttt{g}) \;\overset{\scalebox{1.6}[1.3]{$\curvearrowright$}}{\Big(=} 
    \begin{array}{cc}
        &c_{12^\prime}(\texttt{g}) \\[3pt]
        +\!\!\!\!\!\!& \\[3pt]
        &c_{1{\overline{2}^\prime}}(\texttt{g})
    \end{array}
    \Big) \quad+\quad \pi^{}_2(\texttt{g}) \;\overset{\scalebox{1.6}[1.3]{$\curvearrowright$}}{\Big(=}
    \begin{array}{cc}
        &\pi^{}_{2^\prime}(\texttt{g}) \\
        +\!\!\!\!\!\!& \\[3pt]
        &\pi_{{\overline{2}^\prime}}(\texttt{g})
    \end{array} \Big)\,, 
\end{align}
or, more precisely, $\pi(\texttt{g}) P_{2^\prime} = c_{12^\prime}(\texttt{g}) +  \pi^{}_{2^\prime}(\texttt{g})$ and $\pi(\texttt{g}) P_{{\overline{2}^\prime}} = c_{1{\overline{2}^\prime}}(\texttt{g}) +  \pi_{{\overline{2}^\prime}}(\texttt{g})$, and
\begin{align}    
    \label{la-3} \pi(\texttt{g})P_3 &\;=\quad \underbrace{P_1 \pi(\texttt{g})P_3}_{=:\,c_{13}(\texttt{g})} \quad+\quad \underbrace{P_2 \pi(\texttt{g})P_3}_{=:\,c_{23}(\texttt{g})} \;\overset{\scalebox{1.6}[1.3]{$\curvearrowright$}}{\Bigg(=}\;\underbrace{ P_{2^\prime}\pi(\texttt{g})P_3}_{=:\, c_{2^\prime 3}(\texttt{g})} \;+\; \underbrace{ P_{{\overline{2}^\prime}} \pi(\texttt{g})P_3}_{=:\, c_{{\overline{2}^\prime} 3}(\texttt{g})}\Bigg) \quad+\quad \underbrace{P_3 \pi(\texttt{g})P_3}_{=:\,\pi^{}_3(\texttt{g})} \nonumber\\[3pt]
    &\;=\quad c_{13}(\texttt{g}) \quad+\quad c_{23}(\texttt{g}) \;\overset{\scalebox{1.6}[1.3]{$\curvearrowright$}}{\Big(=} c_{2^\prime 3}(\texttt{g}) + c_{{\overline{2}^\prime} 3}(\texttt{g})\Big) \quad+\quad \pi^{}_3(\texttt{g}) \,.
\end{align}
Then, the action of the dS representation $\pi\colon\; G=\mathrm{SO}_0(1,4) \longrightarrow \mathrm{GL}(V_{\mathrm{tot}})$ takes the form\footnote{Note that, by construction, we have 
\begin{align*}
    \pi^{}_1(\texttt{g}) \Phi = \pi^{}_1(\texttt{g}) \phi^{(g)} \,, \quad \pi^{}_2(\texttt{g}) \Phi = \pi^{}_2(\texttt{g}) \phi^{(2)}\,, \quad \pi^{}_3(\texttt{g}) \Phi = \pi^{}_3(\texttt{g}) \phi^{(s)}\,, \quad c_{12}(\texttt{g}) \Phi = c_{12}(\texttt{g}) \phi^{(2)} \,, \quad c_{23}(\texttt{g}) \Phi = c_{23}(\texttt{g}) \phi^{(s)} \,,
\end{align*}
\begin{align*}
    \pi^{}_{2^\prime}(\texttt{g}) \Phi = \pi^{}_{2^\prime}(\texttt{g}) \phi \,, \quad \pi_{{\overline{2}^\prime}}(\texttt{g}) \Phi = \pi_{{\overline{2}^\prime}}(\texttt{g}) \overline{\phi}\,, \quad c_{12^\prime}(\texttt{g}) \Phi = c_{12^\prime}(\texttt{g}) \phi \,, \quad c_{1{\overline{2}^\prime}}(\texttt{g}) \Phi = c_{1{\overline{2}^\prime}}(\texttt{g}) \overline{\phi}\,, \quad c_{2^\prime 3}(\texttt{g}) \Phi = c_{2^\prime 3}(\texttt{g}) \phi^{(s)} \,, \quad c_{{\overline{2}^\prime} 3}(\texttt{g}) \Phi = c_{{\overline{2}^\prime} 3}(\texttt{g}) \phi^{(s)} \,. 
\end{align*}}
\begin{align}\label{27}
    \pi(\texttt{g})\,\Phi &=\quad \pi(\texttt{g}) P_1 \,\Phi \quad+\quad \pi(\texttt{g}) P_2 \,\Phi \;\overset{\scalebox{1.6}[1.3]{$\curvearrowright$}}{\Big(=} \pi(\texttt{g}) P_{2^\prime}\,\Phi + \pi(\texttt{g}) P_{{\overline{2}^\prime}}\,\Phi \Big) \quad+\quad \pi(\texttt{g}) P_3 \,\Phi \nonumber\\[3pt]
    &=:\quad \pi^{}_1(\texttt{g}) \,\phi^{(g)} \nonumber\\
    &\quad +\quad c_{12}(\texttt{g})\phi^{(2)} \;\overset{\scalebox{1.6}[1.3]{$\curvearrowright$}}{\Big(=} c_{12^\prime}(\texttt{g})\phi + c_{1{\overline{2}^\prime}}(\texttt{g})\overline{\phi} \Big) \quad+\quad \pi^{}_2(\texttt{g})\phi^{(2)} \;\overset{\scalebox{1.6}[1.3]{$\curvearrowright$}}{\Big(=} \pi^{}_{2^\prime}(\texttt{g})\phi + \pi_{{\overline{2}^\prime}}(\texttt{g})\overline{\phi} \Big) \nonumber\\
    &\quad+\quad c_{13}(\texttt{g})\phi^{(s)} \quad+\quad c_{23}(\texttt{g})\phi^{(s)} \;\overset{\scalebox{1.6}[1.3]{$\curvearrowright$}}{\Big(=} c_{2^\prime 3}(\texttt{g})\phi^{(s)} + c_{{\overline{2}^\prime} 3}(\texttt{g})\phi^{(s)} \Big) \quad+\quad \pi^{}_3(\texttt{g}) \phi^{(s)} \,.
\end{align}

\textbf{Step III.} We observe the following:
\begin{enumerate}
    \item{The representation $\pi$ of $G=\mathrm{SO}_0(1,4)$ on $V_{\mathrm{tot}}$ is \emph{reducible} (see footnote \ref{footnote indecom}), since both the standard three-layer filtration and its refined four-layer counterpart contain non-trivial proper closed $G$-invariant subspaces. This invariant subspace structure is summarized in Table \ref{Table 1}.}

    \item{The representation $\pi$ of $G=\mathrm{SO}_0(1,4)$ on $V_{\mathrm{tot}}$ is topologically \emph{indecomposable} (see footnote \ref{footnote indecom}), since there exist no non-trivial closed $G$-invariant subspaces ${\mathfrak{E}}_1$ and ${\mathfrak{E}}_2$ satisfying ${\mathfrak{E}}_1\cap{\mathfrak{E}}_2=\{0\}$ and $(({\mathfrak{E}}_1 + {\mathfrak{E}}_2)^\perp)^\perp=V_{\mathrm{tot}}$ (see Table \ref{Table 1}). This indecomposability is reflected in the non-vanishing extension maps $c_{12} \overset{\scalebox{1.6}[1.3]{$\curvearrowright$}}{(=} c_{12^\prime} + c_{1{\overline{2}^\prime}})$ and $c_{23} \overset{\scalebox{1.6}[1.3]{$\curvearrowright$}}{(=} c_{2^\prime 3} + c_{{\overline{2}^\prime} 3})$, whose presence couples the successive layers of the Araki filtration and prevents the representation from splitting into a direct sum of invariant subrepresentations.}

    \item{The middle representation $\pi_2$ is reducible and topologically decomposable, admitting the direct-sum decomposition into the irreducible representations $\pi^{}_{2^\prime}$ and $\pi_{{\overline{2}^\prime}}$ carried by $V^\prime$ and $\overline{V^\prime}$, respectively. Equivalently, $V^\prime \cap \overline{V^\prime} = \{0\}$ and $\left( \left(V^\prime + \overline{V^\prime} \right)^{\perp_{\vphantom{}_{\mathrm{KG}}}} \right)^{\perp_{\vphantom{}_{\mathrm{KG}}}} = V^\prime + \overline{V^\prime}$ (see Table \ref{Table 1} and footnote \ref{footnote indecom}).}

    \item{The induced representations $\pi^{}_1$ on $V_g$ and $\pi^{}_3$ on $V_s$ are irreducible and, consequently, topologically indecomposable.}

    \item{The induced representations $\pi_1$ and $\pi_3$ are mutually adjoint-conjugate,
    \begin{align}\label{3sharp=1}
        \pi^{}_3 \simeq \pi_1^\sharp \,, \quad \pi_3^\sharp \simeq \pi^{}_1 \,,
    \end{align}
    as follows:
    \begin{enumerate}
        \item{From ${V_g}^{\perp_{\vphantom{}_{\mathrm{KG}}}} = V^\prime \oplus \overline{V^\prime} \oplus V_g$ and ${V_s}^{\perp_{\vphantom{}_{\mathrm{KG}}}} = V^\prime \oplus \overline{V^\prime} \oplus V_s$ (see Table \ref{Table 1}), together with the non-degenerate KG pairing between $V_g$ and $V_s$ in Eq. \eqref{gs}, the two sectors are mutually adjoint-conjugate,
        \begin{align}\label{gssharp}
            V_g^\sharp \simeq V_s \,, \quad V_s^\sharp \simeq V_g \,.
        \end{align}}

        \item{Using Eqs. \eqref{la-1} and \eqref{la-3}, the KG-orthogonality relations summarized in Table \ref{Table 1}, and the $G$-invariance of the KG form, we obtain
        \begin{align} \label{suprise}
            \left\langle \pi_3(\texttt{g})\phi^{(s)} \,,\, \pi_1(\texttt{g})\phi^{(g)} \right\rangle_{\vphantom{}_{\mathrm{KG}}} =&\, \left\langle \pi(\texttt{g})\phi^{(s)} \,,\, \pi(\texttt{g})\phi^{(g)} \right\rangle_{\vphantom{}_{\mathrm{KG}}} \nonumber\\[5pt]
            &\, -\; \underbrace{\left\langle c_{23}(\texttt{g})\phi^{(s)} \,,\, \pi(\texttt{g})\phi^{(g)} \right\rangle_{\vphantom{}_{\mathrm{KG}}}}_{=\,0} -\; \underbrace{\left\langle c_{13}(\texttt{g})\phi^{(s)} \,,\, \pi(\texttt{g})\phi^{(g)} \right\rangle_{\vphantom{}_{\mathrm{KG}}}}_{=\,0} \nonumber\\
            =&\, \left\langle \phi^{(s)} \,,\, \phi^{(g)} \right\rangle_{\vphantom{}_{\mathrm{KG}}} \,.
        \end{align}
        Hence, by the definition \eqref{conj-def} and Eq. \eqref{gssharp}, $\pi^{}_3 \simeq \pi_1^\sharp$ and, equivalently, $\pi_3^\sharp \simeq \pi^{}_1$.}
    \end{enumerate}}

    \item{By the same argument as in Eq. \eqref{suprise}, applied to the non-degenerate restriction of the KG form to $V^\prime \oplus \overline{V^\prime}$ and to each of its two components, one likewise obtains
    \begin{align} \label{sharps}
        \pi^{}_2 \simeq \pi_2^\sharp \,, \quad \pi^{}_{2^\prime} \simeq \pi^\sharp_{2^\prime} \,, \quad \pi_{{\overline{2}^\prime}} \simeq \pi^\sharp_{{\overline{2}^\prime}} \,.
    \end{align}}
    
    \item{The cochain $c_{23}$ is the adjoint of $c_{12}$ in the sense of Eq. \eqref{eq:3.1},
    \begin{align}\label{akse0}
        c^{}_{23} = c_{12}^{\ast} \,.
    \end{align}
    Indeed, using Eqs. \eqref{la-3} and \eqref{la-2}, the KG-orthogonality relations summarized in Table \ref{Table 1}, and the $\mathrm{SO}_0(1,4)$-invariance of the KG form, we obtain 
    \begin{align}\label{Alan-1}
        \left\langle c_{23}(\texttt{g}^{-1})\phi^{(s)} \;,\; \phi_2 \right\rangle_{\vphantom{}_\mathrm{KG}} &= \left\langle \pi(\texttt{g}^{-1})\phi^{(s)} \;,\; \phi_2 \right\rangle_{\vphantom{}_\mathrm{KG}} - \;\;\underbrace{\left\langle \pi^{}_3(\texttt{g}^{-1})\phi^{(s)} \;,\; \phi_2 \right\rangle_{\vphantom{}_\mathrm{KG}}}_{=\, 0} - \;\;\underbrace{\left\langle c^{}_{13}(\texttt{g}^{-1})\phi^{(s)} \;,\; \phi_2 \right\rangle_{\vphantom{}_\mathrm{KG}}}_{=\, 0} \nonumber\\
        &= \left\langle \phi^{(s)} \;,\; \pi(\texttt{g})\phi_2 \right\rangle_{\vphantom{}_\mathrm{KG}} \nonumber\\
        &= \underbrace{\left\langle \phi^{(s)} \;,\; \pi^{}_2(\texttt{g})\phi_2 \right\rangle_{\vphantom{}_\mathrm{KG}}}_{=\, 0} + \;\;\left\langle \phi^{(s)} \;,\; c_{12}(\texttt{g})\phi_2 \right\rangle_{\vphantom{}_\mathrm{KG}} = \left\langle \phi^{(s)} \;,\; c_{12}(\texttt{g})\phi_2 \right\rangle_{\vphantom{}_\mathrm{KG}} \,.
    \end{align}
    An analogous argument yields
    \begin{align}\label{akse}
        c^{}_{2^\prime 3} = c_{12^\prime}^\ast\,, \quad\mbox{and}\quad c^{}_{{\overline{2}^\prime}3} = c_{1{\overline{2}^\prime}}^\ast \,.
    \end{align}}
\end{enumerate}

\textbf{Step IV.} The action \eqref{27} admits the following upper-triangular matrix realization (left), analogous to the standard form \eqref{standard}. Decomposing the middle sector into its two irreducible components further refines this three-layer structure into the corresponding four-layer form shown on the right,
\begin{align}\label{Araki-matrix-bulk}
    \pi(\texttt{g})\Phi =
    \begin{pmatrix}
        \pi^{}_1(\texttt{g}) & c^{}_{12}(\texttt{g}) & c^{}_{13} (\texttt{g}) \\
        0 & \pi^{}_2(\texttt{g}) & c^{}_{23}(\texttt{g}) \\
        0 & 0 & \pi^{}_3(\texttt{g})
    \end{pmatrix}
    \!\!\begin{pmatrix}
        \phi^{(g)}\\
        \phi^{(2)} \\
        \phi^{(s)}
    \end{pmatrix} \,,
    \quad\;\overset{\scalebox{1.6}[1.3]{$\curvearrowright$}}{\qquad^{}}\quad 
    \pi(\texttt{g})\Phi = \begin{pmatrix}
        \pi^{}_1(\texttt{g}) & c^{}_{1{2^\prime}}(\texttt{g}) & c_{1{{\overline{2}^\prime}}}(\texttt{g}) & c^{}_{13}(\texttt{g}) \\[3pt]
        0 & \pi^{}_{2^\prime}(\texttt{g}) & c^{}_{{2^\prime}{{\overline{2}^\prime}}}(\texttt{g})=0 & c^{}_{{2^\prime}3}(\texttt{g}) \\[3pt]
        0 & 0 & \pi_{{\overline{2}^\prime}}(\texttt{g}) & c^{}_{{{\overline{2}^\prime}}3}(\texttt{g}) \\[3pt]
        0 & 0 & 0 & \pi^{}_3(\texttt{g})
    \end{pmatrix}
    \!\!\begin{pmatrix}
        \phi^{(g)}\\
        \phi \\
        \overline{\phi} \\
        \phi^{(s)}
    \end{pmatrix}\,.
\end{align}

By construction, the block operators $\pi_i$ and $c_{ij}$, with $i,j=1,2\overset{\scalebox{1.6}[1.3]{$\curvearrowright$}}{\big(=}2^\prime,{\overline{2}^\prime}\big),3$, satisfy the following fundamental relations, obtained directly from their definitions in terms of the projections $P_i$ (see Eqs. \eqref{la-1}-\eqref{la-3}) by applying the representation law $\pi(\texttt{g}_1\texttt{g}_2)=\pi(\texttt{g}_1)\pi(\texttt{g}_2)$, for all $\texttt{g}_1,\texttt{g}_2 \in G = \mathrm{SO}_0(1,4)$.\footnote{As a representative verification, we establish Eq. \eqref{5-re} directly from the definitions of the corresponding block operators and the representation law; the remaining identities follow analogously. Starting from $\pi(\texttt{g}_1\texttt{g}_2)=\pi(\texttt{g}_1)\pi(\texttt{g}_2)$ and projecting onto the $(1,3)$ sector by acting with $P_1$ from the left and $P_3$ from the right, we obtain
\begin{align*} \label{cg2-composition}
    P_1 \pi(\texttt{g}_1\texttt{g}_2) P_3 = P_1\pi(\texttt{g}_1) \pi(\texttt{g}_2) P_3&\,, \\
    \quad\Longrightarrow\quad c^{}_{13} (\texttt{g}_1\texttt{g}_2) =&\, P_1 \Big( \pi_1(\texttt{g}_1) + c_{12}(\texttt{g}_1) + c_{13}(\texttt{g}_1) + \pi_2(\texttt{g}_1) + c_{23}(\texttt{g}_1) + \pi_3(\texttt{g}_1) \Big) \\
    &\times \Big( \pi_1(\texttt{g}_2) + c_{12}(\texttt{g}_2) + c_{13}(\texttt{g}_2) + \pi_2(\texttt{g}_2) + c_{23}(\texttt{g}_2) + \pi_3(\texttt{g}_2) \Big)P_3 \tag{by Eq. \eqref{27}} \\
    =&\, P_1 \Big( \pi_1(\texttt{g}_1) + c_{12}(\texttt{g}_1) + c_{13}(\texttt{g}_1) \Big) \Big( c_{13}(\texttt{g}_2) + c_{23}(\texttt{g}_2) + \pi_3(\texttt{g}_2) \Big) P_3 \\
    =&\, \pi_1(\texttt{g}_1) c_{13}(\texttt{g}_2) + c_{12}(\texttt{g}_1) c_{23}(\texttt{g}_2) + c_{13}(\texttt{g}_1) \pi_3(\texttt{g}_2) \,.
\end{align*}} Equivalently, the same relations are independently recovered by multiplying the upper-triangular matrix realizations \eqref{Araki-matrix-bulk}. Namely,
\begin{align}\label{1-re}
    \pi^{}_1(\texttt{g}_1\texttt{g}_2) = \pi^{}_1(\texttt{g}_1)\pi^{}_1(\texttt{g}_2) \,,
\end{align}
\begin{align}
    \label{2-re} \pi^{}_2(\texttt{g}_1\texttt{g}_2) = \pi^{}_2(\texttt{g}_1)\pi^{}_2(\texttt{g}_2) \,, \quad\overset{\scalebox{1.6}[1.3]{$\curvearrowright$}}{\qquad^{}}\quad 
    \begin{cases}
        \pi^{}_{2^\prime}(\texttt{g}_1\texttt{g}_2)
        = \pi^{}_{2^\prime}(\texttt{g}_1)\pi^{}_{2^\prime}(\texttt{g}_2) \,, \\[5pt]
        \pi_{{\overline{2}^\prime}}(\texttt{g}_1\texttt{g}_2)
        = \pi_{{\overline{2}^\prime}}(\texttt{g}_1)\pi_{{\overline{2}^\prime}}(\texttt{g}_2) \,,
    \end{cases} 
\end{align}
\begin{align} \label{3-re} 
    \pi^{}_3(\texttt{g}_1\texttt{g}_2) = \pi^{}_3(\texttt{g}_1)\pi^{}_3(\texttt{g}_2) \,, 
\end{align}
\begin{align} \label{4-re}
    \begin{cases}
        c_{12}(\texttt{g}_1\texttt{g}_2) = \pi^{}_1(\texttt{g}_1)c_{12}(\texttt{g}_2) + c_{12}(\texttt{g}_1)\pi^{}_2(\texttt{g}_2) \,, \\[5pt]
        \quad \Longrightarrow\quad (\delta c_{12})(\texttt{g}_1,\texttt{g}_2) = 0 \,, \qquad \text{(by Eq. \eqref{sum-cn})} \\[5pt]
        \quad \Longrightarrow\quad 0\neq \! \underbrace{c_{12} \in H^1(\pi^{}_2, \pi^{}_1)}_{\mbox{see note (1) below}} \,,
    \end{cases}   
    \quad\overset{\scalebox{1.6}[1.3]{$\curvearrowright$}}{\qquad^{}}\quad 
    \begin{cases}
        c_{12^\prime}(\texttt{g}_1\texttt{g}_2) = \pi^{}_1(\texttt{g}_1)c_{1{2^\prime}}(\texttt{g}_2) + c_{1{2^\prime}}(\texttt{g}_1)\pi^{}_{2^\prime}(\texttt{g}_2) \,, \\[5pt]
        \quad \Longrightarrow\quad (\delta c_{12^\prime})(\texttt{g}_1, \texttt{g}_2) = 0 \,, \qquad \text{(by Eq. \eqref{sum-cn})} \\[5pt]
        \quad \Longrightarrow\quad 0 \neq \underbrace{c^{}_{12^\prime} \in H^1(\pi^{}_{2^\prime},\pi^{}_1)}_{\mbox{see note (1) below}} \,, \\[5pt]
        \\
        c_{1{\overline{2}^\prime}}(\texttt{g}_1\texttt{g}_2) = \pi^{}_1(\texttt{g}_1)c_{1{\overline{2}^\prime} }(\texttt{g}_2) + c_{1{\overline{2}^\prime}}(\texttt{g}_1)\pi_{{\overline{2}^\prime}}(\texttt{g}_2) \,, \\[5pt]
        \quad \Longrightarrow\quad (\delta c_{1{\overline{2}^\prime}})(\texttt{g}_1,\texttt{g}_2) = 0 \,, \qquad \text{(by Eq. \eqref{sum-cn})} \\[5pt]
        \quad \Longrightarrow\quad 0 \neq \underbrace{c^{}_{1{\overline{2}^\prime}} \in H^1(\pi_{{\overline{2}^\prime}},\pi^{}_1)}_{\mbox{see note (1) below}} \,,  
    \end{cases} 
\end{align}
\begin{align} \label{5-re} 
    \begin{cases}
        c_{13}(\texttt{g}_1\texttt{g}_2) =\pi^{}_1(\texttt{g}_1)c_{13}(\texttt{g}_2) + c_{12}(\texttt{g}_1)c_{23}(\texttt{g}_2) \;\overset{\scalebox{1.6}[1.3]{$\curvearrowright$}}{\big(=}c^{}_{12^\prime}(\texttt{g}_1) c^{}_{2^\prime3}(\texttt{g}_2) + c^{}_{1{\overline{2}^\prime}}(\texttt{g}_1) c^{}_{{\overline{2}^\prime} 3}(\texttt{g}_2) \big) + c_{13}(\texttt{g}_1)\pi^{}_3(\texttt{g}_2) \,, \\[5pt]
        \quad \Longrightarrow\quad (\delta c_{13})(\texttt{g}_1,\texttt{g}_2) = - c^{}_{12}(\texttt{g}_1) c^{}_{23}(\texttt{g}_2) \;\overset{\scalebox{1.6}[1.3]{$\curvearrowright$}}{\big(=}- c^{}_{12^\prime}(\texttt{g}_1) c^{}_{2^\prime3}(\texttt{g}_2) - c^{}_{1{\overline{2}^\prime}}(\texttt{g}_1) c^{}_{{\overline{2}^\prime} 3}(\texttt{g}_2) \big) \qquad \text{(by Eq. \eqref{sum-cn})} \\[5pt]
        \hspace{3.5cm} = - c^{}_{12}(\texttt{g}_1) c^\ast_{12}(\texttt{g}_2) \;\overset{\scalebox{1.6}[1.3]{$\curvearrowright$}}{\big(=} - \,c^{}_{12^\prime}(\texttt{g}_1) c^{\ast}_{12^\prime}(\texttt{g}_2) - c^{}_{1{\overline{2}^\prime}}(\texttt{g}_1) c^{\ast}_{1{\overline{2}^\prime}}(\texttt{g}_2)\big) \,, \qquad \text{(by Eqs. \eqref{akse0}, \eqref{akse})} \\[5pt]
        \quad \Longrightarrow\quad c^{}_{12}\times c^\ast_{12} \;\overset{\scalebox{1.6}[1.3]{$\curvearrowright$}}{\big(=} \,c^{}_{12^\prime} \times c^{\ast}_{12^\prime} + c^{}_{1{\overline{2}^\prime}}\times c^{\ast}_{1{\overline{2}^\prime}}\big) \;\;\in\;\; B^2_s(\pi^{}_3\simeq \pi^\sharp_1, \pi^{}_1) \,, \qquad \text{(by Eqs. \eqref{pro-eq}, \eqref{Def-2-co}, \eqref{khodshifte}, \eqref{3sharp=1})}
    \end{cases}
\end{align}
\begin{align} \label{6-re} 
    \begin{cases}
        c^{}_{23}(\texttt{g}_1\texttt{g}_2) = \pi^{}_2(\texttt{g}_1)c^{}_{23}(\texttt{g}_2) + c^{}_{23}(\texttt{g}_1)\pi^{}_3(\texttt{g}_2) \,, \\[5pt]
        \quad \Longrightarrow\quad (\delta c^{}_{23})(\texttt{g}_1,\texttt{g}_2) = 0 \,, \qquad \text{(by Eq. \eqref{sum-cn})} \\[5pt]
        \quad \Longrightarrow\quad \underbrace{c^{}_{23}=c_{12}^\ast\in H^1(\pi^{}_3\simeq \pi_1^\sharp, \pi^{}_2)}_{\mbox{see Eqs. \eqref{akse0}, \eqref{3sharp=1}, note (2) below}} \,,
    \end{cases}    
    \quad\overset{\scalebox{1.6}[1.3]{$\curvearrowright$}}{\qquad^{}}\quad  
    \begin{cases}
        c^{}_{2^\prime 3}(\texttt{g}_1\texttt{g}_2) = \pi^{}_{2^\prime}(\texttt{g}_1)c^{}_{{2^\prime}3}(\texttt{g}_2) + c^{}_{{2^\prime}3}(\texttt{g}_1)\pi^{}_3(\texttt{g}_2) \,, \\[5pt]
        \quad \Longrightarrow\quad (\delta c^{}_{2^\prime3})(\texttt{g}_1,\texttt{g}_2) = 0 \,, \qquad \text{(by Eq. \eqref{sum-cn})} \\[5pt]
        \quad \Longrightarrow\quad 0 \neq \!\!\!\!\! \underbrace{c^{}_{2^\prime 3} = c^{\ast}_{12^\prime} \in H^1(\pi^{}_3 \simeq \pi^\sharp_1, \pi^{}_{2^\prime})}_{\mbox{see Eqs. \eqref{akse}, \eqref{3sharp=1}, note (2) below}} \,, \\[5pt]
        \\
        c^{}_{2^\prime {\overline{2}^\prime}}(\texttt{g}_1\texttt{g}_2) = 0\,, \\[5pt]
        \\
        c^{}_{{\overline{2}^\prime} 3}(\texttt{g}_1\texttt{g}_2) = \pi_{{\overline{2}^\prime}}(\texttt{g}_1)c^{}_{{\overline{2}^\prime} 3}(\texttt{g}_2) + c^{}_{{\overline{2}^\prime} 3}(\texttt{g}_1)\pi^{}_3(\texttt{g}_2) \,, \\[5pt]
        \quad \Longrightarrow\quad (\delta c^{}_{{\overline{2}^\prime}3})(\texttt{g}_1,\texttt{g}_2) = 0 \,, \qquad \text{(by Eq. \eqref{sum-cn})} \\[5pt]
        \quad \Longrightarrow\quad 0 \neq \!\!\!\!\! \underbrace{ c^{}_{{\overline{2}^\prime} 3} = c^{\ast}_{1 {\overline{2}^\prime}} \in H^1(\pi^{}_3 \simeq \pi_1^\sharp, \pi_{{\overline{2}^\prime}})}_{\mbox{see Eqs. \eqref{akse}, \eqref{3sharp=1}, note (2) below}} \,.
    \end{cases} 
\end{align}
Note that a non-trivial class $c_{ij} \in H^1(\pi_j,\pi_i) = Z^1(\pi_j,\pi_i)/B^1(\pi_j,\pi_i)$ is represented by a $1$-cocycle $c_{ij} \in Z^1(\pi_j,\pi_i)$, with $(\delta c_{ij})(\texttt{g}_1,\texttt{g}_2) = 0$, such that $c_{ij} \notin B^1(\pi_j,\pi_i)$, i.e., there exists no fixed $b_{ij} \in C^0(\pi_j,\pi_i)$ satisfying 
\begin{align}
    c_{ij}(\texttt{g}_1) = (\delta b_{ij})(\texttt{g}_1) = \pi_i(\texttt{g}_1)b_{ij} - b_{ij}\pi_j(\texttt{g}_1) \,,
\end{align}
for all $\texttt{g}_1, \texttt{g}_2 \in G = \mathrm{SO}_0(1,4)$. For each of the $1$-cocycles listed above, including $c_{12}$ and $c_{12^\prime}$, the condition $(\delta c_{ij})(\texttt{g}_1, \texttt{g}_2) = 0$ has already been established explicitly; hence $c_{ij}\in Z^1(\pi_j, \pi_i)$. It therefore remains only to show that $c_{ij}\notin B^1(\pi_j,\pi_i)$ in order to conclude that $c_{ij}$ defines a non-trivial class in $H^1(\pi_j,\pi_i)$. Explicitly,
\begin{enumerate}
    \item{$c_{12}\notin B^1(\pi^{}_2,\pi^{}_1)$. If $c_{12}$ were a $1$-coboundary, then, by definition, there would exist a fixed linear map $b_{12} \; \left(\in C^0(\pi^{}_2,\pi^{}_1)\right) \colon\; V^\prime\oplus\overline{V^\prime}\longrightarrow V_g$, such that, for every $\texttt{g}_1 \in G=\mathrm{SO}_0(1,4)$,
    \begin{align}\label{c12} 
        c_{12}(\texttt{g}_1) = (\delta b_{12})(\texttt{g}_1) = \pi^{}_1(\texttt{g}_1)b_{12} - b_{12}\pi^{}_2(\texttt{g}_1) \,.
    \end{align}
    On the other hand, for each $\phi^{(2)} \in V^\prime \oplus \overline{V^\prime}$, Eq. \eqref{la-2} gives
    \begin{align} \label{la-2'}
        \pi(\texttt{g}_1)\phi^{(2)} = \underbrace{\pi^{}_2(\texttt{g}_1) \phi^{(2)}}_{\in\,V^\prime\oplus\overline{V^\prime}} + \,\underbrace{c_{12}(\texttt{g}_1) \phi^{(2)}}_{\in\,V_g} \,.
    \end{align}
    Substituting Eq. \eqref{c12} into \eqref{la-2'}, and using the fact that $\pi(\texttt{g}_1) b_{12} = \pi^{}_1(\texttt{g}_1) b_{12}$ (which follows from $b_{12}\phi^{(2)} \in V_g$ and the $G= \mathrm{SO}_0(1,4)$-invariance of $V_g$), we obtain
    \begin{align} \label{61}
        \pi(\texttt{g}_1)(1 - b_{12}) \phi^{(2)} = (1 - b_{12}) \pi^{}_2(\texttt{g}_1) \phi^{(2)} \,.
    \end{align}
    Notably, the vectors $w := (1-b_{12})\phi^{(2)}$ span a subspace $W \subset \mathfrak{H}_2 = V_g \oplus V^\prime \oplus \overline{V^\prime}$ (see Eq. \eqref{32}). One has $W \cap V_g = \{0\}$: indeed, if $w \in V_g$, then $P_2w = \phi^{(2)} = 0$, and hence $w=0$. Moreover, every $v_g+\phi^{(2)}\in\mathfrak{H}_2$ admits the decomposition
    \begin{align}
        v_g + \phi^{(2)} = \big(v_g + b_{12} \phi^{(2)}\big) + (1 - b_{12}) \phi^{(2)} \,,
    \end{align}
    with the first term belonging to $V_g$ and the second to $W$; thus
    \begin{align}
        \mathfrak{H}_2 = V_g \oplus W \,.
    \end{align}
    Equation \eqref{61} further shows that $W$ is $G=\mathrm{SO}_0(1,4)$-invariant. This contradicts the invariant-subspace structure summarized in Table \ref{Table 1}, according to which the action on $V^\prime\oplus\overline{V^\prime}$ necessarily extends into the gauge sector, so that $V_g$ admits no $G$-invariant complement in $\mathfrak{H}_2$. Hence, no such fixed map $b_{12}$ exists, and therefore $c_{12}\notin B^1(\pi_2,\pi_1)$. Proceeding analogously, one obtains
    \begin{enumerate}
        \item{$c_{12^\prime}\notin B^1(\pi^{}_{2^\prime},\pi^{}_1)$, and}

        \item{$c_{1{{\overline{2}^\prime}}}\notin B^1(\pi_{{\overline{2}^\prime}},\pi^{}_1)$.}
    \end{enumerate}}

    \item{$c^{}_{23}\notin B^1(\pi^{}_3,\pi^{}_2)$. Indeed, suppose, to the contrary, that $c^{}_{23}$ were a $1$-coboundary. Then there would exist a fixed linear map $b^{}_{23}\in C^0(\pi^{}_3,\pi^{}_2)$ such that $c^{}_{23} = (\delta b^{}_{23})$. Taking the adjoint and using $c^{}_{23} = c_{12}^{\ast}$ \eqref{akse0}, together with Eq. \eqref{maryam}, we obtain
    \begin{align}
        c_{12} = c_{23}^{\ast} = (\delta b^{}_{23})^{\ast} = -\delta(b_{23}^{\ast}) \,,
    \end{align}
    where $b_{23}^{\ast} \in C^0(\pi_2^\sharp,\pi_3^\sharp)$, $\pi^{}_2 \simeq \pi_2^\sharp$ \eqref{sharps}, and $\pi_3^\sharp \simeq \pi^{}_1$ \eqref{3sharp=1}. It then follows that $\delta(b_{23}^{\ast})\in B^1(\pi^{}_2,\pi^{}_1)$, and therefore $c_{12}\in B^1(\pi^{}_2,\pi^{}_1)$, contradicting the preceding item. Consequently, $c_{23}\notin B^1(\pi^{}_3,\pi^{}_2)$. Proceeding analogously, one obtains
    \begin{enumerate}
        \item{$c_{2^\prime 3}\notin B^1(\pi^{}_3,\pi^{}_{2^\prime})$, and}

        \item{$c_{{{\overline{2}^\prime}} 3}\notin B^1(\pi^{}_3, \pi_{{\overline{2}^\prime}})$.}
    \end{enumerate}}
\end{enumerate}

\end{proof}

\section{Conclusion}\label{Sec conclu}

In this work, we have uncovered the Araki-cohomological structure underlying the dS-covariant indecomposable Krein realization of the scalar discrete-series representations $\Pi_{p,0}$, $p=1,2,\cdots$, of $\mathrm{SO}_0(1,4)$. We showed that the Krein-Gupta-Bleuler construction realizes Araki's standard Gupta-Bleuler triplet; the gauge and supplementary sectors form an adjoint-conjugate pair, while the successive non-split extensions are represented by non-trivial first cohomology classes related by adjunction. Their product is a self-adjoint $2$-coboundary, thereby realizing the cohomological conditions of Araki's existence theorem explicitly.

Beyond the standard triplet, we showed that the reducibility of the intermediate representation carries additional cohomological structure. Resolving this sector into its two irreducible components decomposes the corresponding extension classes into distinct non-trivial first cohomology classes and yields a four-layer refinement of the Araki filtration. The resulting Gupta-Bleuler quadruplet separates the physical representation from its complex-conjugate sector while preserving the adjoint-paired extension structure linking them to the gauge and supplementary layers.

The scalar dS discrete series thus provides a concrete setting in which the Gupta-Bleuler structure, indecomposability, and group cohomology become different facets of the same representation-theoretic structure. In particular, the passage from the standard triplet to the refined quadruplet shows that resolving a reducible sector can reveal extension data invisible at the coarser level. This suggests a natural route toward the cohomological analysis of more general discrete-series representations \cite{GravGB, takVec, GravPartially} of $\mathrm{SO}_0(1,4)$ and their associated gauge structures.

Another complementary, and perhaps deeper, direction is to determine the role of Araki cohomology in a full quantum field-theoretic realization of the dS discrete series. For the lowest member of the scalar family, $\Pi_{1,0}$, a dS-covariant quantum field theory within the Krein-Gupta-Bleuler framework has already been constructed in Ref. \cite{Takook}. The cohomological structure established here suggests a systematic route toward its generalization to the entire scalar discrete series $\Pi_{p,0}$, $p=1,2,\cdots$. The central problem is to understand how the non-trivial extension classes governing the one-particle indecomposable structure are promoted to the quantum field-theoretic level and how they enter the construction of the field, its state space, and the corresponding physical quotient. In this sense, the Araki cohomology uncovered here is not merely an additional characterization of the one-particle representation, but provides a representation-theoretic framework within which a dS-covariant Krein-Gupta-Bleuler quantum field theory for the full scalar discrete series may be developed. Establishing this connection, and ultimately determining its scope across the dS discrete series, remains an important open problem.

\section*{Acknowledgement}

Maryam Bajalan and Hamed Pejhan are supported by the National Science Fund, Ministry of Education and Science of Bulgaria, under contract KP-06-N92/2.



\begin{thebibliography}{a}

\bibitem{Gazeau2024} M. Enayati, J.-P. Gazeau, H. Pejhan, and A. Wang, \emph{The de Sitter (dS) group and its representations} (2nd edition), Springer, Cham, Switzerland (2024).

\bibitem{Gazeau2026} J.-P. Gazeau, H. Pejhan, and I. Todorov, \emph{Massless Representations in Conformal Space and Their dS Restrictions}, forthcoming in Cambridge Monographs on Mathematical Physics, Cambridge University Press; arXiv:2601.18433 (2026).

\bibitem{Dixmier} J. Dixmier, \emph{Représentations intégrables du groupe de De Sitter}, Bull. Soc. Math. Fr., 89, 9 (1961).

\bibitem{Takahashi} R. Takahashi, \emph{Sur les représentations unitaires des groupes de Lorentz généralisés}, Bull. Soc. Math. Fr., 91, 289 (1963).

\bibitem{GravGB} H. Pejhan, M. Enayati, J.-P. Gazeau, A. Wang, \emph{Gupta-Bleuler quantization for linearized gravity in de Sitter spacetime}, Phys. Rev. D, 100, 066012 (2019).

\bibitem{takVec} T. Garidi, J.-P. Gazeau, S. Rouhani, M.V. Takook, \emph{``Massless'' vector field in de Sitter universe}, J. Math. Phys., 49, 032501 (2008).

\bibitem{GravPartially} J.-P. Gazeau, H. Pejhan, \emph{Covariant quantization of the partially massless graviton field in de Sitter spacetime}, Phys. Rev. D, 108, 065012 (2023).

\bibitem{Bajalan2026} M. Bajalan, J.-P. Gazeau, and H. Pejhan, \emph{dS Scalar Discrete Series: Gupta-Bleuler Structure and Holography}, ArXiv: 2607.17124 (2026), accepted for publication in \emph{Phys. Rev. D}.

\bibitem{Araki-1985-97} H. Araki, \emph{Indecomposable representations with invariant inner product}, Commun. Math. Phys., 97, 149-159 (1985).

\bibitem{Pierotti-1990-31} D. Pierotti, \emph{Some remarks on the Gupta-Bleuler triplet}, J. Math. Phys., 31, 1862-1871 (1990).

\bibitem{Simon-1980} J. Simon, \emph{Introduction to the 1-cohomology of Lie groups}, in \emph{Harmonic Analysis and Representations of Semisimple Lie Groups}, NATO Advanced Study Institutes Series, Vol. 39, edited by P. Trombi (D. Reidel, Dordrecht, 1980), pp. 449-465.

\bibitem{Birrell} N.D. Birrell and P.C.W. Davies, \emph{Quantum fields in curved space}, Cambridge University Press (1982).

\bibitem{Higuchi} A. Higuchi, \emph{Symmetric tensor spherical harmonics on the N-sphere and their application to the de Sitter group $\mathrm{SO}(N, 1)$}, J. Math. Phys., 28(7), 1553-1566 (1987).

\bibitem{Takook} J.-P. Gazeau, J. Renaud, and M.V. Takook, \emph{Gupta-Bleuler quantization for minimally coupled scalar fields in de Sitter space}, Class. Quant. Grav., 17, 1415-1434 (2000).

\end{thebibliography}
\end{document}